\documentclass[11pt]{article}
\usepackage{amsmath,amssymb,amsthm,geometry,bm}
\usepackage{setspace}
\usepackage[authoryear,longnamesfirst,round]{natbib}
\usepackage[colorlinks=true, linkcolor=blue, citecolor=blue, urlcolor=blue]{hyperref}
\usepackage{etoc}

\newcommand{\R}{\mathbb{R}}
\newcommand{\E}{\mathbb{E}}

\newcommand{\one}{\mathbf{1}}
\DeclareMathOperator*{\argmin}{argmin}

\theoremstyle{plain}
\newtheorem{theorem}{Theorem}

\newtheorem{lemma}[theorem]{Lemma}
\newtheorem{corollary}[theorem]{Corollary}

\theoremstyle{definition}

\newtheorem{assumption}{Assumption}

\theoremstyle{remark}
\newtheorem{remark}[theorem]{Remark}

\title{Factor-Augmented Machine Learning Panel Regressions}
\author{Andrii Babii\\
		Department of Economics, University of North Carolina -- Chapel Hill\\
		and \\
		Luca Barbaglia \\
		European Commission, Joint Research Centre (JRC)\\
		and \\
		Eric Ghysels\\
		Department of Economics and Kenan-Flagler Business School, \\University of North Carolina -- Chapel Hill\\
		and \\
		Jonas Striaukas\\
		Department of Finance, Copenhagen Business School
	}
\date{\today}

\begin{document}
\maketitle

\begin{abstract}
	This paper develops the asymptotic theory for high-dimensional panel data regressions in settings with cross-sectionally dependent errors driven by common shocks. We consider a factor-augmented sparse-group LASSO estimator that combines MIDAS aggregation with latent factors. The estimator can take advantage of the mixed-frequency group structure in the time-series dimension. Theory shows that it can outperform the standard LASSO estimator both for prediction and estimation while allowing for cross-sectional dependence.
\end{abstract}

\noindent\textbf{Keywords:} factor-augmented panel regressions; sparse-group LASSO;
MIDAS; mixed-frequency data; high-dimensional forecasting; cross-sectional dependence.

\noindent\textbf{JEL classification:} C23; C32; C53; C55; C58.

\etocsettocdepth.toc{none}

\section{Introduction}
Modern nowcasting and forecasting problems often rely on high-dimensional panel datasets with many cross-sectional units, a large number of predictors, and mixed-frequency data arriving in real time. Such panels typically exhibit strong cross-sectional dependence, reflecting common shocks in macroeconomic and financial time series. At the same time, the predictive component is often both sparse and dense: a few observed predictors may be important, after controlling for common factors generating broad comovement across units and variables.

This paper develops asymptotic theory for a factor-augmented sparse-group LASSO framework for nowcasting and forecasting with mixed-frequency panels.
The previous literature on mixed-frequency regressions includes the low-dimensional MIDAS regressions introduced in \cite{ghysels2006predicting}, \cite{ghysels2007midas}, and \cite{andreou2010regression}.

There is also a large literature on high-dimensional machine learning methods for i.i.d. data, e.g. \cite{bickel2009simultaneous}, \cite{belloni2014inference}, and \cite{cai2022sparse}.

\cite{babii2022machine} proposed to use the sparse-group LASSO of \cite{simon2013sparse} for high-dimensional time series and developed the asymptotic theory. \cite{mogliani2021bayesian} proposed a Bayesian version of the group LASSO for MIDAS regressions. Existing high-dimensional panel regression methods, such as \citet{babii2022machinepanel}, allow for high-dimensional covariates under approximate sparsity but do not exploit factor structure to capture cross-sectional dependence in the errors. Related sparse-plus-dense and factor-based panel approaches, see \cite{hansen2016factor}, \cite{ruecker2022estimation}, and \cite{fan2022bridging}, do not cover the mixed-frequency nowcasting problem with MIDAS aggregation and sparse-group regularization. The closest work to this paper is \cite{beyhum2024factor}, which studies factor-augmented sparse MIDAS regressions for time series, but not panel data regressions.

\paragraph*{Notation}

For an integer $N\in\mathbb{N}$, let $[N]=\{1,\ldots,N\}$. For a finite set $C$, let $|C|$ denote its cardinality. For a vector $v\in\R^p$, let $|v|_q=(\sum_{j=1}^p|v_j|^q)^{1/q}$ denote the $\ell_q$-norm with $|v|_\infty = \max_{j\in[p]}|v_j|$ and $|v|_0=\sum_{j=1}^p\one_{v_j\ne 0}$. A group is defined as a set of indices $G\subset[p]$. Let $v_G\in\R^p$ be a vector such that $(v_G)_i=v_i$ if $i\in G$ and $(v_G)_i=0$ otherwise. For vectors $v,w\in\R^n$, let $\langle v,w\rangle_n=n^{-1}\langle v,w\rangle$, where $\langle\cdot,\cdot\rangle$ denotes the Euclidean inner product, and let $\|\cdot\|_n=\sqrt{\langle\cdot,\cdot\rangle_n}$ be the associated norm. For a matrix $A\in\R^{n \times m}$, let $\|A\|_{\rm op}=\sup_{|v|_2=1}|Av|_2$. For a collection of groups $\mathcal{G}=\{G_1,\dots,G_K\}$, let $\|v\|_{2,1}=\sum_{k=1}^K|v_{G_k}|_2$ and $v_{\mathcal{G}}$ be the vector such that $(v_{\mathcal{G}})_i=v_i$ if $i\in\mathcal{G}$ and $(v_{\mathcal{G}})_i=0$ otherwise. The quantity $n_1\vee n_2$ is the maximum of $n_1$ and $n_2$. For a matrix $A$, let $\sigma_r(A),\sigma_{\min}(A)$ and $\sigma_{\max}(A)$ denote the $r^{\rm th}$ largest, smallest, and largest singular values of $A$, respectively.

\section{Factor-Augmented Machine Learning Panel Regressions}\label{sec:factor_aug_panel}
Consider the following factor-augmented predictive regression:
\begin{equation}\label{eq:panel_factor_reg}
	y_{i,t+1} = q_{i,t}^\top\delta + e_{i,t+1},\qquad e_{i,t+1} = f_{i,t}^\top\gamma + a_{i,t}+\varepsilon_{i,t+1},\qquad i\in[N],\;t\in[T],
\end{equation}
where $q_{i,t}\in\R^p$ includes all observed covariates, $f_{i,t}\in\R^R$ are latent factors, and $a_{i,t}\in\R$ is a misspecification error due to approximate sparsity or misspecified factor structure. The latent factors $f_{i,t}$ and the misspecification error $a_{i,t}$ can generate the cross-sectional dependence in errors.\footnote{This covers, for example, the interactive-effects specification $f_{i,t}=\Lambda_i g_t$, where $g_t\in\R^r$ is a common factor and $\Lambda_i\in\R^{R\times r}$ is a unit-specific loading matrix. Cross-sectional dependence then arises through the common factor; see also \cite{babii2025tensor} for a more refined tensor factor model analysis.} Therefore, the model represents a flexible specification with approximately sparse signals, dense components, and cross-sectional dependence.

\subsection{MIDAS}
Covariates may be measured at higher frequencies than the outcome $y_{i,t+1}$. In this case we will transform the high-frequency data into Equation~\eqref{eq:panel_factor_reg} using the MIDAS approach of \cite{babii2022machine}. Consider a panel of $K_x$ covariates for $N$ units observed over $T$ low-frequency periods, with $m_x$ high-frequency observations for each low-frequency period:\footnote{For simplicity of presentation, we assume that the number of high-frequency observations is the same across all low-frequency periods. This assumption can be easily relaxed, and the observations can also be allowed to extend forward into the nowcasting period.}
\[
	\left\{x_{i,t-(j-1)/m_x,k}^H:\ i\in[N],\ t\in[T],\ j\in[m_x],\ k\in[K_x]\right\}.
\]
Suppose this panel follows the approximate factor model:
\begin{equation}\label{eq:panel_factor_x}
	x_{i,t-(j-1)/m_x,k}^H = b_k^\top f_{i,t-(j-1)/m_x}^H + u_{i,t-(j-1)/m_x,k}^H,
\end{equation}
where $b_k\in\R^{K_f}$ is the loading vector, $f_{i,t-(j-1)/m_x}^H\in\R^{K_f}$ is the high-frequency factor, and $u_{i,t-(j-1)/m_x,k}^H$ is the idiosyncratic component. A second panel of $K_z$ high-frequency covariates
\[
	\left\{z_{i,t-(j-1)/m_z,k}^H:\ i\in[N],\ t\in[T],\ j\in[m_z],\ k\in[K_z]\right\}
\]
need not follow a factor model and will enter Equation~\eqref{eq:panel_factor_reg} only through the approximately sparse coefficients.

For each unit $i$ and covariate $k$, collect the high-frequency observations in matrices $X_{i,k}^H :=(x_{i,t-(j-1)/m_x,k}^H)_{t\in[T],\,j\in[m_x]}$ and $Z_{i,k}^H:=(z_{i,t-(j-1)/m_z,k}^H)_{t\in[T],\,j\in[m_z]}$.
These high-frequency observations can be mapped with MIDAS weights into the regression model in Equation~\eqref{eq:panel_factor_reg} as follows. Let $w_l:[0,1]\to\R$ for $l\in[L]$ be a dictionary of functions used to approximate the MIDAS weights. Define the weighting matrices $W^x:=\left(w_l((j-1)/m_x)/m_x\right)_{j\in[m_x],l\in[L]}$ and $W^z:=\left(w_l((j-1)/m_z)/m_z\right)_{j\in[m_z],l\in[L]}.$
Then $X_{i,k}^H W^x\in\R^{T\times L}$ and $Z_{i,k}^H W^z\in\R^{T\times L}$ are the MIDAS-weighted versions of high-frequency observations. Define
\begin{equation}\label{eq:panel_midas_x}
	\mathbf{x}_i:=(X_{i,1}^H W^x,\dots,X_{i,K_x}^H W^x)\in\R^{T\times p_x},
	\qquad
	\mathbf{z}_i:=(Z_{i,1}^H W^z,\dots,Z_{i,K_z}^H W^z)\in\R^{T\times p_z},
\end{equation}
where $p_x=LK_x$ and $p_z=LK_z$. Let $x_{i,t}\in\R^{p_x}$ and $z_{i,t}\in\R^{p_z}$ be the $t^{\text{th}}$ rows of $\mathbf{x}_i$ and $\mathbf{z}_i$ transposed, respectively. The aggregated covariates are collected in a vector $q_{i,t}:=(1,z_{i,t}^\top,x_{i,t}^\top)^\top\in\R^p$, where $p=p_z+p_x+1$, and we assume that the initial observations of lagged variables are available. Then $q_{i,t}$ corresponds to the vector of covariates in Equation~\eqref{eq:panel_factor_reg}.

\subsection{Factor-Augmented sg-LASSO}
For estimation purposes, using matrix notation, define $\mathbf{q}_i:=(q_{i,1},\dots,q_{i,T})^\top\in\R^{T\times p}$, $\mathbf{y}_i:=(y_{i,2},\dots,y_{i,T+1})^\top\in\R^{T}$, and stack $\mathbf{y}:=(\mathbf{y}_1^\top,\dots,\mathbf{y}_N^\top)^\top\in\R^{NT}$, $\mathbf{Q}:=(\mathbf{q}_1^\top,\dots,\mathbf{q}_N^\top)^\top\in\R^{NT\times p}$, and $\mathbf{X}:=(\mathbf{x}_1^\top,\dots,\mathbf{x}_N^\top)^\top\in\R^{NT\times p_x}$. Let also $\mathbf{A}\in\R^{NT}$ and $\mathbf{E}\in\R^{NT}$ be the stacked versions of $a_{i,t}$ and $\varepsilon_{i,t+1}$. Using matrix notation, Equation~\eqref{eq:panel_factor_reg} becomes
\begin{equation*}
	\mathbf{y}=\mathbf{Q}\delta+\mathbf{F}\gamma+\mathbf{A}+\mathbf E.
\end{equation*}

The factor structure of covariates, see Equation~\eqref{eq:panel_factor_x}, implies that the dense block of covariates can be written as
\begin{equation}\label{eq:factor_model}
	\mathbf{X}=\mathbf{F}B^\top + \mathbf{U},\qquad \E[\mathbf{U}|\mathbf{F},B]=0,
\end{equation}
where $\mathbf{F}\in\R^{NT\times R}$ are the (MIDAS-aggregated) factors with $R=LK_f$, $B\in\R^{p_x\times R}$ are the factor loadings, and $\mathbf{U}\in\R^{NT\times p_x}$ are the idiosyncratic components. The MIDAS aggregation introduces a group structure in the regression model, where each original predictor contributes a group of $L$ dictionary coefficients; see \cite{babii2022machine}. The group structure with $K$ groups is described as a partition $\{G_1,\dots,G_K\}$ of $[p]=\{1,\dots,p\}$.

We extract factors from $\mathbf{X}$ using PCA. Let the columns of $\hat{\mathbf{F}}/\sqrt{NT}$ be the eigenvectors associated with the largest $R$ eigenvalues of $\mathbf{X}\mathbf{X}^\top$, so that $\frac{1}{NT}\hat{\mathbf{F}}^\top\hat{\mathbf{F}}=I_R$. The factor-augmented sg-LASSO estimator is
\begin{equation}\label{eq:fapd_est}
	(\hat\delta,\hat\gamma)\in\argmin_{(d,c)\in\R^p\times\R^R}
	\|\mathbf{y}-\mathbf{Q}d-\hat{\mathbf{F}}c\|_{NT}^2
	+2\lambda_1|d|_1+2\lambda_2\|d\|_{2,1},
\end{equation}
where $|d|_1 = \sum_{j=1}^p |d_j|$ and $\|d\|_{2,1} = \sum_{k=1}^K |d_{G_k}|_2$ are the $\ell_1$ and $\ell_{2,1}$ norms, respectively, and $\lambda_1,\lambda_2\geq0$ are tuning parameters. The factor-augmented sg-LASSO thus captures both the dense component of the covariates through the factors and the approximately sparse component through the penalized regression on $\mathbf{Q}$. The $\ell_1$ LASSO penalty encourages sparsity at the coordinate level, while the $\ell_{2,1}$ group LASSO penalty encourages sparsity at the group level.

The estimator in Equation~\eqref{eq:fapd_est} combines a dense component, represented by the factors estimated from $\mathbf{X}$, with a sparse component, represented by the penalized regression on $\mathbf{Q}$; see also \cite{beyhum2024factor}. We consider an asymptotic regime in which $N,T\to\infty$, the number of covariates $p\to\infty$ may grow with $N$ and $T$, and the number of factors $R$ remains fixed. The fixed-$R$ assumption can also be relaxed; see \cite{babii2025tensor}, \cite{beyhum2022factor}, and \cite{freeman2023linear}.

Let $S=\{j\in[p]:\delta_j\ne0\}$ be the active coordinates in the true parameter $\delta\in\R^p$ and let $\mathcal A=\{k\in[K]:\delta_{G_k}\ne0\}$ be the active group indices. Put $s=|S|$, $m=|\mathcal A|$, and for an integer $\ell\geq1$, let
\begin{equation*}
	d_{(\ell)}=\max_{\substack{I\subset[K]:|I|=\ell}}\sum_{k\in I}|G_k|
\end{equation*}
be the total number of elements in the largest $\ell$ groups. Let $P_{\hat{\mathbf{F}}}=(NT)^{-1}\hat{\mathbf{F}}\hat{\mathbf{F}}^\top$, $M_{\hat{\mathbf{F}}}=I_{NT}-P_{\hat{\mathbf{F}}}$, and $\tilde{\mathbf{Q}}=M_{\hat{\mathbf{F}}}\mathbf{Q}$.

\begin{assumption}\label{as:fapd_hl_score}
	$\exists\sigma>0$ such that $\E\left[e^{u a^\top\tilde{\mathbf{Q}}^\top\mathbf{E}}|\mathbf{Q}\right] \leq \exp\left(\frac{u^2\sigma^2NT|a|_2^2}{2}\right),\forall u\in\R,a\in\R^p$.
\end{assumption}

\begin{assumption}\label{as:fapd_hl_re}
	$\exists\kappa>0$ such that $\|\tilde{\mathbf Q}v\|_{NT}^2 \geq \kappa^2|v|_2^2,\forall v\in\R^p$ such that $|v|_1/\sqrt{s}+\|v\|_{2,1}/\sqrt{m}\leq 10|v|_2$ with probability approaching one.
\end{assumption}

\begin{assumption}\label{as:tuning}
	The tuning parameters are such that $\lambda_1=r_{s,m}/\sqrt{s}$ and $\lambda_2=r_{s,m}/\sqrt{m}$ with
	\begin{equation*}
		r_{s,m} = 4\sqrt{2}\sigma\sqrt{\frac{m\log(eK/m) + s\log(5ed_{(m)}/s) + \log(2/\epsilon)}{NT}},\qquad \epsilon\in(0,1).
	\end{equation*}
\end{assumption}
Assumption~\ref{as:fapd_hl_score} requires that regression errors are sub-Gaussian conditionally on the covariates. It can be relaxed in several different ways, e.g. winsorizing the data or robustifying the least-squares objective function with Huber or median-of-means losses; see \cite{lugosi2019mean} for more details. Assumption~\ref{as:fapd_hl_re} is a restricted eigenvalue condition and allows for singularity of the design matrix. Assumption~\ref{as:tuning} requires the oracle rates for tuning parameters. It is also known that the LASSO estimator with tuning parameters selected by cross-validation achieves similar results as the oracle LASSO estimator; see \cite{chetverikov2021cross}.

The following result holds:
\begin{theorem}\label{thm:fapd_oracle}
	Suppose that Assumptions~\ref{as:fapd_hl_score}, \ref{as:fapd_hl_re}, and \ref{as:tuning} hold. Then with probability $1-\epsilon-o(1)$,
	\begin{equation*}
		\left\|\mathbf{Q}\hat\delta+\hat{\mathbf{F}}\hat\gamma-\mathbf{Q}\delta-\mathbf{F}\gamma\right\|_{NT} \leq \frac{3.5r_{s,m}}{\kappa} + \sqrt{2}\|\mathbf{A}\|_{NT} + 2\|M_{\hat{\mathbf{F}}}\mathbf{F}\gamma\|_{NT} + \|P_{\hat{\mathbf{F}}}\mathbf E\|_{NT}
	\end{equation*}
	and
	\begin{equation*}
		s^{-1/2}|\hat\delta - \delta|_1 + m^{-1/2}\|\hat\delta - \delta\|_{2,1} \leq \frac{40r_{s,m}}{\kappa^2} + \frac{5}{r_{s,m}}\|M_{\hat{\mathbf{F}}}\mathbf{F}\gamma + M_{\hat{\mathbf{F}}}\mathbf{A}\|_{NT}^2.
	\end{equation*}
\end{theorem}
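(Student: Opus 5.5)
The plan is to profile out the factor coefficient and reduce the problem to a sparse-group LASSO on the projected design $\tilde{\mathbf Q}=M_{\hat{\mathbf F}}\mathbf Q$. For fixed $d$, the minimization over $c$ in \eqref{eq:fapd_est} is an unpenalized least squares problem, so $\hat{\mathbf F}\hat c(d)=P_{\hat{\mathbf F}}(\mathbf y-\mathbf Q d)$. Hence $\hat\delta$ minimizes
\[
\|M_{\hat{\mathbf F}}\mathbf y-\tilde{\mathbf Q}d\|_{NT}^2+2\lambda_1|d|_1+2\lambda_2\|d\|_{2,1}.
\]
Moreover, $M_{\hat{\mathbf F}}\mathbf y=\tilde{\mathbf Q}\delta+\mathbf V+M_{\hat{\mathbf F}}\mathbf E$ with $\mathbf V:=M_{\hat{\mathbf F}}(\mathbf F\gamma+\mathbf A)$. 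This is a sg-LASSO with noise $M_{\hat{\mathbf F}}\mathbf E$ and misspecification $\mathbf V$, which is the setting of \cite{babii2022machine}.

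\textbf{Step 1 (score bound).} Let $\Omega(v)=\alpha|v|_1+(1-\alpha)\|v\|_{2,1}$ for a suitable weighting matching $\lambda_1,\lambda_2$, and let $\Omega^*$ be its dual norm. Assumption~\ref{as:fapd_hl_score} controls $\langle \tilde{\mathbf Q}v, \mathbf E\rangle_{NT}$ uniformly over $v$ in the unit ball of $|v|_1/\sqrt s+\|v\|_{2,1}/\sqrt m$. Note that $\tilde{\mathbf Q}^\top M_{\hat{\mathbf F}}\mathbf E=\tilde{\mathbf Q}^\top\mathbf E$. I will follow the covering/peeling argument in \cite{babii2022machine}: count supports with at most $m$ groups and $s$ coordinates, which gives the entropy $m\log(eK/m)+s\log(5ed_{(m)}/s)$; then use $\varepsilon$-nets on spheres and a union bound with sub-Gaussian tails. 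The result is that, with probability at least $1-\epsilon$,
\[
|\langle\tilde{\mathbf Q}v,\mathbf E\rangle_{NT}|\le \tfrac{r_{s,m}}{4}\bigl(|v|_1/\sqrt s+\|v\|_{2,1}/\sqrt m+|v|_2\bigr)
\]
or a close variant. I expect this to be the main obstacle, because getting the exact constant $4\sqrt2$ requires careful bookkeeping in the net argument. As a fallback I would use the simpler dual-norm bound $\Omega^*(\tilde{\mathbf Q}^\top\mathbf E/NT)\le\lambda/2$, obtained from a union bound over groups.

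\textbf{Step 2 (basic inequality and cone).} Write $\Delta=\hat\delta-\delta$ and compare the objective at $\hat\delta$ with its value at $\delta$. This gives
\[
\|\tilde{\mathbf Q}\Delta\|_{NT}^2\le 2\langle\tilde{\mathbf Q}\Delta,\mathbf E+\mathbf V\rangle_{NT}+2\lambda_1(|\delta|_1-|\hat\delta|_1)+2\lambda_2(\|\delta\|_{2,1}-\|\hat\delta\|_{2,1}).
\]
Decompose the penalties over $S$ and $\mathcal A$. Bound the $\mathbf V$ term by $\|\tilde{\mathbf Q}\Delta\|^2_{NT}/2+2\|\mathbf V\|^2_{NT}$, or alternatively by Cauchy–Schwarz, depending on the case. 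Then split into two cases.

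\emph{Case (a):} $\|\mathbf V\|_{NT}^2$ is dominated by the estimation term. Then $\Delta$ lies in the cone $|v|_1/\sqrt s+\|v\|_{2,1}/\sqrt m\le 10|v|_2$ (using $|\Delta_S|_1\le\sqrt s|\Delta|_2$ and $\|\Delta_{\mathcal A}\|_{2,1}\le\sqrt m|\Delta|_2$). Assumption~\ref{as:fapd_hl_re} then gives $\|\tilde{\mathbf Q}\Delta\|_{NT}\lesssim r_{s,m}/\kappa$ and $|\Delta|_2\lesssim r_{s,m}/\kappa^2$, so the left side of the second display is at most $40r_{s,m}/\kappa^2$.

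\emph{Case (b):} otherwise, the penalty term itself is bounded by $\|\mathbf V\|^2_{NT}/r_{s,m}$ up to the constant $5$. Adding the two cases gives the second bound.

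\textbf{Step 3 (prediction).} Decompose the fitted error as
\[
\mathbf Q\hat\delta+\hat{\mathbf F}\hat\gamma-\mathbf Q\delta-\mathbf F\gamma=\tilde{\mathbf Q}\Delta-M_{\hat{\mathbf F}}\mathbf F\gamma+P_{\hat{\mathbf F}}(\mathbf A+\mathbf E).
\]
This uses $\hat{\mathbf F}\hat\gamma=P_{\hat{\mathbf F}}(\mathbf y-\mathbf Q\hat\delta)$. Apply the triangle inequality. The term $\|\tilde{\mathbf Q}\Delta\|_{NT}$ comes from Step 2: in the cone case it is at most $3.5r_{s,m}/\kappa$, and it contributes $\|M_{\hat{\mathbf F}}(\mathbf F\gamma+\mathbf A)\|_{NT}$-type terms otherwise. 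Then use $\|P_{\hat{\mathbf F}}\mathbf A\|_{NT}^2+\|M_{\hat{\mathbf F}}\mathbf A\|_{NT}^2=\|\mathbf A\|_{NT}^2$ to merge the $\mathbf A$ pieces into $\sqrt2\|\mathbf A\|_{NT}$. The $\mathbf F\gamma$ pieces collect into $2\|M_{\hat{\mathbf F}}\mathbf F\gamma\|_{NT}$, and $P_{\hat{\mathbf F}}\mathbf E$ remains as stated. The probability $1-\epsilon-o(1)$ combines the event from Step 1 with the restricted eigenvalue event of Assumption~\ref{as:fapd_hl_re}.
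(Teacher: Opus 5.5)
Your architecture matches the paper's: profiling out $c$, a net-based score bound, a two-case analysis, and the Step~3 decomposition. The gap is that the inequality you start Step~2 from cannot produce the constants in the statement ($3.5$, $\sqrt2$, $2$, $40$, $5$), which you assert rather than derive. As written, the proposal only proves the theorem with larger constants.

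Comparing objective values at $\hat\delta$ and $\delta$ gives $\|\tilde{\mathbf Q}\Delta\|_{NT}^2\le 2\langle\tilde{\mathbf Q}\Delta,\mathbf E+\mathbf V\rangle_{NT}+2(\Omega(\delta)-\Omega(\hat\delta))$. Since $\langle\mathbf V,\tilde{\mathbf Q}\Delta\rangle_{NT}$ can equal $\|\mathbf V\|_{NT}\|\tilde{\mathbf Q}\Delta\|_{NT}$, the best you can extract in the cone case, via Cauchy--Schwarz or Young, is $\|\tilde{\mathbf Q}\Delta\|_{NT}\le c\,r_{s,m}/\kappa+2\|\mathbf V\|_{NT}$. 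The coefficient $2$ on $\|\mathbf V\|_{NT}$ propagates through Step~3 to $3\|M_{\hat{\mathbf F}}\mathbf F\gamma\|_{NT}+2\|M_{\hat{\mathbf F}}\mathbf A\|_{NT}+\|P_{\hat{\mathbf F}}\mathbf A\|_{NT}$. That is bounded by $3\|M_{\hat{\mathbf F}}\mathbf F\gamma\|_{NT}+\sqrt5\|\mathbf A\|_{NT}$ but not by $2\|M_{\hat{\mathbf F}}\mathbf F\gamma\|_{NT}+\sqrt2\|\mathbf A\|_{NT}$. Likewise, splitting on the cone, the second display comes out with $80$ and $10$ instead of $40$ and $5$.

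Step~1 does not rescue the $3.5$. Let $\xi=\tilde{\mathbf Q}^\top\mathbf E/(NT)$, and let $\rho_{s,m}(\xi)$ be the supremum of $|v^\top\xi|$ over unit vectors with at most $s$ nonzero coordinates in at most $m$ groups. With the constant $4\sqrt2$ in $r_{s,m}$, a $1/2$-net only gives $\rho_{s,m}(\xi)\le 2\max_{c\in\mathcal N}|c^\top\xi|\le r_{s,m}/2$, not $r_{s,m}/4$. At that level your inequality gives $c=7$, and even the level $r_{s,m}/4$ you hope for would give $c=5.5$. Your fallback is worse. A bound $|v^\top\xi|\le\frac12(\lambda_1|v|_1+\lambda_2\|v\|_{2,1})$ for all $v$ forces $|\xi_j|\le(\lambda_1+\lambda_2)/2$ for every $j$. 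A union bound certifies this only when $\lambda_1+\lambda_2\gtrsim\sqrt{\log p/(NT)}$, which the tuning of Assumption~\ref{as:tuning} need not reach.

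The missing ingredient is the first-order optimality condition. Fermat's rule gives $\frac{1}{NT}\tilde{\mathbf Q}^\top(\tilde{\mathbf Q}\hat\delta-M_{\hat{\mathbf F}}\mathbf y)+z^*=0$ with $z^*\in\partial\Omega(\hat\delta)$. Taking the inner product with $\delta-\hat\delta$ and using $\langle z^*,\delta-\hat\delta\rangle\le\Omega(\delta)-\Omega(\hat\delta)$ yields
\[
\|\tilde{\mathbf Q}\Delta\|_{NT}^2\le\langle\mathbf E+\mathbf V,\tilde{\mathbf Q}\Delta\rangle_{NT}+\Omega(\delta)-\Omega(\hat\delta).
\]
This is your basic inequality strengthened by an extra $\|\tilde{\mathbf Q}\Delta\|_{NT}^2$ on the left, reflecting strong convexity of the quadratic loss. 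Use the score bound $|v^\top\xi|\le\frac{r_{s,m}}{2}\bigl(|v|_2+|v|_1/\sqrt s+\|v\|_{2,1}/\sqrt m\bigr)$ with $\lambda_1=r_{s,m}/\sqrt s$ and $\lambda_2=r_{s,m}/\sqrt m$. This gives $\|\tilde{\mathbf Q}\Delta\|_{NT}^2\le 3.5r_{s,m}|\Delta|_2-\frac{r_{s,m}}{2}\bigl(|\Delta_{S^c}|_1/\sqrt s+\|\Delta_{\mathcal A^c}\|_{2,1}/\sqrt m\bigr)+\|\tilde{\mathbf Q}\Delta\|_{NT}\|\mathbf V\|_{NT}$.

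Then split on whether $\Delta$ lies in the cone of Assumption~\ref{as:fapd_hl_re} itself. Splitting on the size of $\mathbf V$, as you propose, does not by itself certify the specific constant $10$ the RE condition needs.
\begin{itemize}
\item Inside the cone, RE gives $\|\tilde{\mathbf Q}\Delta\|_{NT}\le 3.5r_{s,m}/\kappa+\|\mathbf V\|_{NT}$. Then $10\|\mathbf V\|_{NT}/\kappa\le 5r_{s,m}/\kappa^2+5\|\mathbf V\|_{NT}^2/r_{s,m}$ turns $35$ into $40$.
\item Outside the cone, add back the $S,\mathcal A$ parts (at most $r_{s,m}|\Delta|_2$) and use $|\Delta|_2<\frac{1}{10}\bigl(|\Delta|_1/\sqrt s+\|\Delta\|_{2,1}/\sqrt m\bigr)$. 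This leaves $\frac{r_{s,m}}{20}\bigl(|\Delta|_1/\sqrt s+\|\Delta\|_{2,1}/\sqrt m\bigr)\le\|\tilde{\mathbf Q}\Delta\|_{NT}\|\mathbf V\|_{NT}-\|\tilde{\mathbf Q}\Delta\|_{NT}^2\le\|\mathbf V\|_{NT}^2/4$. That gives the constant $5$ and $\|\tilde{\mathbf Q}\Delta\|_{NT}\le\|\mathbf V\|_{NT}$.
\end{itemize}
The coefficient $1$ on $\|\mathbf V\|_{NT}$ is exactly what makes Step~3 yield $2\|M_{\hat{\mathbf F}}\mathbf F\gamma\|_{NT}$ and $\|M_{\hat{\mathbf F}}\mathbf A\|_{NT}+\|P_{\hat{\mathbf F}}\mathbf A\|_{NT}\le\sqrt2\|\mathbf A\|_{NT}$.

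Finally, in Step~1 make explicit the deterministic reduction that makes the bound uniform in $v$. Sort groups by $\ell_2$-norm into blocks of $m$, and split the coordinates of each block into pieces of $s$. This gives $|v^\top\xi|\le\rho_{s,m}(\xi)\bigl(|v|_2+|v|_1/\sqrt s+\|v\|_{2,1}/\sqrt m\bigr)$ for all $v$, after which the net bound on $\rho_{s,m}(\xi)$ applies.
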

Theorem~\ref{thm:fapd_oracle} states the oracle inequality for the factor-augmented sg-LASSO estimator. The proof of this and other results can be found in the Appendix. Next, we deduce the convergence rate of the factor-augmented sg-LASSO estimator under additional assumptions. Let $u_{i,t}\in\R^{p_x}$ be a column vector corresponding to the row for observation $(i,t)$ in $\mathbf{U}$.
\begin{assumption}\label{as:fapd_factor}
	The following conditions hold:
	\begin{itemize}
		\item[(i)] $\frac{1}{NT}\mathbf{F}^\top \mathbf{F}\xrightarrow{p}I_R$ and $\frac{1}{p_x}B^\top B\xrightarrow{p}D$, where $D$ is a diagonal, positive definite matrix;
		\item[(ii)] $\exists C<\infty$ such that $\sup_{i,t}\lambda_{\max}\left(\E\left(u_{i,t}u_{i,t}^\top \mid B\right)\right)\leq C$;
		\item[(iii)] $|\gamma|_2=O(1)$;
		\item[(iv)] $\exists\sigma>0$ with $\E[e^{ua^\top \mathbf{E}}|\mathbf{X}]\leq \exp\left(\sigma^2u^2|a|^2_2/2\right),\forall u\in\R,a\in\R^{NT}$;
		\item[(v)] $\exists C<\infty$ such that $\|\mathrm{Var}(\mathbf E\mid \mathbf{X})\|_{\rm op}\leq C$ and the entries of $\mathbf{E}$ are independent given $\mathbf{X}$;
		\item[(vi)] $\max_{i,s}\E|u_{i,s}|_2^4=O(p_x^2)$ and $\max_{(i,s)\ne(j,t)}\E(u_{i,s}^\top u_{j,t})^2 = O(p_x)$.
	\end{itemize}
\end{assumption}
Assumption~\ref{as:fapd_factor} imposes a set of restrictions needed to characterize the estimation error in factors; see \cite{bai2003inferential}. Some of the more high-level conditions in this assumption can also be reduced to a set of more primitive conditions; see e.g. \cite{beyhum2024factor} and references therein.

\begin{corollary}\label{cor:fapd_rate}
	Suppose that Assumptions~\ref{as:fapd_hl_score}, \ref{as:fapd_hl_re}, \ref{as:tuning}, and \ref{as:fapd_factor} hold. Then
	\begin{equation*}
		\|\mathbf{Q}\hat\delta+\hat{\mathbf{F}}\hat\gamma-\mathbf{Q}\delta-\mathbf{F}\gamma\|_{NT}^2 = O_P\left(\frac{m\log(eK/m)+s\log(ed_{(m)}/s)}{NT}\right)
	\end{equation*}
	and
	\begin{equation*}
		|\hat\delta - \delta|_1 = O_P\left(\sqrt{\frac{sm\log(eK/m)+s^2\log(ed_{(m)}/s)}{NT}}\right),
	\end{equation*}
	provided that $\|\mathbf A\|_{NT} = O(r_{s,m})$ and $p_x^{-1} = O(r_{s,m}^2)$.
\end{corollary}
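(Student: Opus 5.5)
The plan is to feed the factor-estimation error bounds into Theorem~\ref{thm:fapd_oracle}. Both inequalities there involve only $r_{s,m}$, $\kappa$, $\|\mathbf A\|_{NT}$, $\|M_{\hat{\mathbf F}}\mathbf F\gamma\|_{NT}$ and $\|P_{\hat{\mathbf F}}\mathbf E\|_{NT}$. It therefore suffices to show, under Assumption~\ref{as:fapd_factor} and the side conditions, that
\[
\|M_{\hat{\mathbf F}}\mathbf F\gamma\|_{NT}=O_P(r_{s,m})\qquad\text{and}\qquad \|P_{\hat{\mathbf F}}\mathbf E\|_{NT}=O_P(r_{s,m}).
\]
Once these hold, the first bound of Theorem~\ref{thm:fapd_oracle} gives a prediction error of order $O_P(r_{s,m})$; note $\kappa$ is bounded away from zero by Assumption~\ref{as:fapd_hl_re}, and $\epsilon$ is fixed, then sent to zero for the $O_P$ statement. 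Squaring gives the first claim, since $r_{s,m}^2\asymp (m\log(eK/m)+s\log(ed_{(m)}/s))/(NT)$; the constant $5$ inside the logarithm is absorbed.

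For the second claim, use $\|M_{\hat{\mathbf F}}\mathbf A\|_{NT}\le\|\mathbf A\|_{NT}=O(r_{s,m})$, because $M_{\hat{\mathbf F}}$ is a projection. The right-hand side of the second inequality is then $O_P(r_{s,m})+O_P(r_{s,m}^2)/r_{s,m}=O_P(r_{s,m})$. Multiplying $s^{-1/2}|\hat\delta-\delta|_1$ by $\sqrt s$ gives $|\hat\delta-\delta|_1=O_P(\sqrt s\,r_{s,m})$, which is the stated rate.

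\textbf{The noise term.} Write $\|P_{\hat{\mathbf F}}\mathbf E\|_{NT}^2=(NT)^{-2}|\hat{\mathbf F}^\top\mathbf E|_2^2/(NT)\cdot NT$, i.e.
\[
\|P_{\hat{\mathbf F}}\mathbf E\|_{NT}^2=|(NT)^{-1}\hat{\mathbf F}^\top\mathbf E|_2^2,
\]
using $\hat{\mathbf F}^\top\hat{\mathbf F}/(NT)=I_R$. Since $\hat{\mathbf F}$ is a measurable function of $\mathbf X$, condition (iv) or (v) of Assumption~\ref{as:fapd_factor} applies conditionally on $\mathbf X$. This gives $\E[|\hat{\mathbf F}^\top\mathbf E|_2^2\mid\mathbf X]\le C\,\mathrm{tr}(\hat{\mathbf F}^\top\hat{\mathbf F})=C R\,NT$. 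Hence the noise term is $O_P(\sqrt{R/(NT)})=O_P(r_{s,m})$ for fixed $R$, because $r_{s,m}\gtrsim (NT)^{-1/2}$.

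\textbf{The factor-approximation term (main obstacle).} Let $\tilde{\mathbf F}=\mathbf F H$ with $H$ the usual rotation of \cite{bai2003inferential}, so that $\|M_{\hat{\mathbf F}}\mathbf F\gamma\|_{NT}=\|M_{\hat{\mathbf F}}(\mathbf F H-\hat{\mathbf F})H^{-1}\gamma\|_{NT}$. This is bounded by
\[
\|\hat{\mathbf F}-\mathbf F H\|_{\rm op}(NT)^{-1/2}\,|H^{-1}\gamma|_2 .
\]
Following Bai's argument with the roles of dimensions adapted, here the ``cross-section'' is $p_x$ and the ``time'' is $NT$. Expanding $\hat{\mathbf F}V=\mathbf X\mathbf X^\top\hat{\mathbf F}/(NTp_x)$ with $\mathbf X=\mathbf FB^\top+\mathbf U$, I would show
\[
(NT)^{-1}|\hat{\mathbf F}-\mathbf FH|_F^2=O_P\bigl(p_x^{-1}+(NT)^{-1}\bigr).
\]
The three cross terms $\mathbf FB^\top\mathbf U^\top$, $\mathbf UB\mathbf F^\top$ and $\mathbf U\mathbf U^\top$ are controlled as follows:
\begin{itemize}
\item conditions (i)–(ii) give invertibility of $V$ and $H$ and the loading moments;
\item condition (vi) handles $\mathbf U\mathbf U^\top$ through its diagonal ($\E|u_{i,s}|_2^4=O(p_x^2)$) and off-diagonal ($\E(u_{i,s}^\top u_{j,t})^2=O(p_x)$) entries.
\end{itemize}
Condition (iii) and the invertibility of $H$ then give $\|M_{\hat{\mathbf F}}\mathbf F\gamma\|_{NT}^2=O_P(p_x^{-1}+(NT)^{-1})=O_P(r_{s,m}^2)$ under $p_x^{-1}=O(r_{s,m}^2)$. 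The delicate part is bounding the $\mathbf U\mathbf U^\top$ term without the weak-dependence sums Bai assumes: I would use only Cauchy–Schwarz with the fourth-moment condition and accept a slightly cruder but sufficient rate. I would also check that the $(NT)^{-1}$ term is not degraded to $(NT)^{-1/2}$ there.
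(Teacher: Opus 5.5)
Your proposal is correct and follows essentially the paper's route. You plug the factor-estimation bounds into Theorem~\ref{thm:fapd_oracle}, control $\|\hat{\mathbf F}-\mathbf F H\|/\sqrt{NT}$ through the eigen-equation expansion into the three cross terms, and bound $\|P_{\hat{\mathbf F}}\mathbf E\|_{NT}$ by its conditional second moment; the paper uses Hanson--Wright there, but Markov suffices. Your Cauchy--Schwarz/fourth-moment treatment of $\mathbf U\mathbf U^\top$ gives exactly $O_P(p_x^{-1}+(NT)^{-1})$ with no degradation.

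The one genuine variation is that you pass to $\|M_{\hat{\mathbf F}}\mathbf F\gamma\|_{NT}$ through $H^{-1}$. That needs $\|H^{-1}\|_{\rm op}=O_P(1)$, which follows from $H^\top(\mathbf F^\top\mathbf F/NT)H=I_R+o_P(1)$ once the factor-error bound is in hand, not from (i)--(ii) alone. Invertibility of $V$ likewise uses (vi). The paper sidesteps inverting the rotation by writing $\hat{\mathbf F}^\top M_{\mathbf F}=(\hat{\mathbf F}-\mathbf F H^\top)^\top M_{\mathbf F}$ and bounding $\|P_{\hat{\mathbf F}}-P_{\mathbf F}\|_{\rm op}$ directly.
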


\begin{remark}
	Theorem~\ref{thm:fapd_oracle} and Corollary~\ref{cor:fapd_rate} extend the time-series framework of \cite{beyhum2024factor} to panel data. That framework extends \cite{babii2022machine}. These papers derive the rate for time-series data when $N=1$, which under comparable assumptions has order $O_P\left(\frac{(s\vee m)\log p}{T}\right)$. For panel data, \cite{babii2022machinepanel} derive the rate of order $O_P\left(\frac{(s\vee m)\log p}{NT}\right)$. Neither of these rates reflects the advantage of the sparse-group structure of the regression model, and both rates are slower than the rate in Corollary~\ref{cor:fapd_rate}.
\end{remark}
\begin{remark}
	For cross-sectional data when $T=1$, Theorem 5 of \cite{cai2022sparse} derives the minimax lower bound on the convergence rate in $\ell_2$ norm for any estimator of order
	\[
		O_P\left(\frac{m\log(eK/m) + s\log(em G_*/s)}{N}\right),
	\]
	provided that all groups have the same size $G_*$. In this case, the size of $m$ largest groups becomes $d_{(m)}=mG_*$. This rate can be faster than the optimal $O_P(s\log(p/s)/N)$ rate of the LASSO estimator obtained in \cite{bellec2018slope}.
\end{remark}

\newpage
\appendix
\begin{center}
	\Large\bfseries APPENDIX
\end{center}

%\tableofcontents
%\etocsettocdepth.toc{subsection}

\setcounter{theorem}{0}
\setcounter{assumption}{0}
\renewcommand{\thetheorem}{\thesection.\arabic{theorem}}
\renewcommand{\theassumption}{\thesection.\arabic{assumption}}

\section{Proofs}
\subsection{Proof of Theorem~\ref{thm:fapd_oracle}}
\begin{lemma}\label{lem:fapd_score_bound}
	Under Assumptions~\ref{as:fapd_hl_score} and \ref{as:tuning}, we have with probability at least $1-\epsilon$,
	\begin{equation*}
		\left|\left\langle\tilde{\mathbf{Q}}^\top\mathbf E,v\right\rangle_{NT}\right|
		\leq \frac{r_{s,m}}{2}
		\left(|v|_2+\frac{|v|_1}{\sqrt{s}}+\frac{\|v\|_{2,1}}{\sqrt{m}}\right), \qquad \forall v\in\R^p.
	\end{equation*}
\end{lemma}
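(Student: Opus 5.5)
The bound will follow from the sparse-group chaining/peeling argument of \cite{cai2022sparse}, adapted as in \cite{babii2022machine}, applied to the sub-Gaussian vector $\xi:=(NT)^{-1}\tilde{\mathbf Q}^\top\mathbf E\in\R^p$. Conditionally on $\mathbf Q$, Assumption~\ref{as:fapd_hl_score} gives $\E[e^{u\langle a,\xi\rangle}\mid\mathbf Q]\le\exp(u^2\sigma^2|a|_2^2/(2NT))$. Hence $\langle a,\xi\rangle$ is sub-Gaussian with variance proxy $\sigma^2|a|_2^2/(NT)$ for every fixed $a$. The target inequality is homogeneous in $v$, so it suffices to control $\sup_v \langle\xi,v\rangle/(|v|_2+|v|_1/\sqrt s+\|v\|_{2,1}/\sqrt m)$.

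\textbf{Step 1: bound over structured sparse vectors.} Consider the set $\mathcal V$ of unit-$\ell_2$ vectors supported on at most $m$ groups and at most $s$ coordinates. I would bound $\sup_{v\in\mathcal V}\langle\xi,v\rangle$ by a union bound.
\begin{itemize}
\item Choose the $m$ groups: at most $\binom{K}{m}\le (eK/m)^m$ ways.
\item Choose $s$ coordinates among the at most $d_{(m)}$ coordinates in those groups: at most $(ed_{(m)}/s)^s$ ways.
\item Cover each resulting $s$-dimensional unit sphere with a $1/2$-net of size $5^s$, and use the standard net argument $\sup\le 2\max_{\rm net}$.
\end{itemize}
The log-cardinality is $m\log(eK/m)+s\log(5ed_{(m)}/s)$. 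The sub-Gaussian tail plus the union bound then gives, with probability at least $1-\epsilon$,
\[
\sup_{v\in\mathcal V}\langle\xi,v\rangle\le 2\sigma\sqrt{2\,\frac{m\log(eK/m)+s\log(5ed_{(m)}/s)+\log(2/\epsilon)}{NT}},
\]
which is $r_{s,m}/(2\sqrt2)\cdot$(constant); the slack in $4\sqrt2$ absorbs the factors below. The factor $2$ in $\log(2/\epsilon)$ handles two-sidedness.

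\textbf{Step 2: extend from $\mathcal V$ to all $v$.} This is the step I expect to be the main obstacle. I would decompose an arbitrary $v$ by a sparse-group peeling.
\begin{itemize}
\item Order groups by $|v_{G_k}|_2$ and split them into consecutive blocks of $m$ groups.
\item Within each block, order coordinates by magnitude and split into chunks of $s$ coordinates.
\end{itemize}
Each piece lies in a scaled copy of $\mathcal V$, so $\langle\xi,v\rangle\le \sup_{\mathcal V}\langle \xi,\cdot\rangle\sum_{\text{pieces}}|v_{\rm piece}|_2$. The standard shelling bounds give $|v_{J_{k+1}}|_2\le |v_{J_k}|_1/\sqrt s$ for coordinate chunks and $|v_{\mathcal G_{k+1}}|_2\le\|v_{\mathcal G_k}\|_{2,1}/\sqrt m$ for group blocks. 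Summing these yields
\[
\sum|v_{\rm piece}|_2\le C\bigl(|v|_2+|v|_1/\sqrt s+\|v\|_{2,1}/\sqrt m\bigr).
\]
The delicate point is the interaction of the two levels of peeling. Coordinate chunks inside later group blocks must be charged to $|v|_1/\sqrt s$ without double counting, and the group-level remainder must be charged to $\|v\|_{2,1}/\sqrt m$.

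To keep the constant $C$ small (ideally $C\le 2$, so that the final bound has coefficient $r_{s,m}/2$), I would follow the argument in the proof of Theorem~5 / Lemma of \cite{cai2022sparse}. Alternatively, one can bound the dual norm of $\max(|v|_2,|v|_1/\sqrt s,\|v\|_{2,1}/\sqrt m)$-type unit balls by the convex hull of $\mathcal V$, using $\mathrm{conv}(\mathcal V)\supseteq$ that ball scaled by a constant. Combining Steps 1 and 2 and multiplying through by the normalization in $\langle\cdot,\cdot\rangle_{NT}$ completes the proof.
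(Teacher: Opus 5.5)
Your route is the paper's. Step 1 is the same $1/2$-net and union bound over at most $(eK/m)^m(5ed_{(m)}/s)^s$ points. Step 2 is the same two-level peeling: groups sorted by $\ell_2$ norm into blocks of $m$, then coordinates inside each block sorted into chunks of $s$.

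\textbf{The gap is in the constant, which the lemma states exactly.} Your Step 1 bound $2\sigma\sqrt{2(\cdots)/NT}$ is \emph{equal} to $r_{s,m}/2$. There is no slack in the factor $4\sqrt2$ to absorb anything. So the statement requires the peeling constant to be $C=1$. If you only achieve $C\le 2$, as you propose to aim for, you obtain the bound with $r_{s,m}$ in place of $r_{s,m}/2$. You leave $C$ undetermined and defer the two-level step to \cite{cai2022sparse}, calling it the main obstacle. As written, the proposal therefore does not yet prove the lemma.

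\textbf{The missing step is short and involves no double counting.} Write $\rho_{s,m}(\xi)=\sup_{v\in\mathcal V}|v^\top\xi|$.
\begin{itemize}
\item Inside block $\ell$, the coordinate shelling $|v_{J_{\ell,q}}|_2\le |v_{J_{\ell,q-1}}|_1/\sqrt s$ for $q\ge1$ gives $\sum_{q\ge0}|v_{J_{\ell,q}}|_2\le |v_{\mathcal G_\ell}|_2+|v_{\mathcal G_\ell}|_1/\sqrt s$.
\item Summing over $\ell$, the $\ell_1$ parts add up to exactly $|v|_1/\sqrt s$, because the sets $\mathcal G_\ell$ are disjoint.
\item The group shelling is valid because every block except the last contains exactly $m$ groups. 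It gives $\sum_{\ell\ge0}|v_{\mathcal G_\ell}|_2\le |v_{\mathcal G_0}|_2+\|v\|_{2,1}/\sqrt m\le |v|_2+\|v\|_{2,1}/\sqrt m$.
\end{itemize}
Together these yield $\sum_{\ell,q}|v_{J_{\ell,q}}|_2\le |v|_2+|v|_1/\sqrt s+\|v\|_{2,1}/\sqrt m$, which is $C=1$.

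Each $v_{J_{\ell,q}}$ has at most $s$ nonzeros lying in at most $m$ groups. Hence $|\xi^\top v|\le \rho_{s,m}(\xi)\sum_{\ell,q}|v_{J_{\ell,q}}|_2$, and combining this with your Step 1 gives the lemma with exactly $r_{s,m}/2$. The same computation shows that the unit ball of $|v|_2+|v|_1/\sqrt s+\|v\|_{2,1}/\sqrt m$ lies in $\mathrm{conv}(\mathcal V)$ without any scaling. So your convex-hull alternative also only works with constant one.
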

\begin{proof}[Proof of Lemma~\ref{lem:fapd_score_bound}]
	Let $\xi:=\frac{\tilde{\mathbf Q}^{\top}\mathbf E}{NT}$ denote the score. Let $\{G_1,\dots,G_K\}$ denote the groups corresponding to the population parameter $\delta\in\R^p$. For $v\in\mathbb R^p$, let ${\mathcal A}(v):=\{k\in[K]:v_{G_k}\neq0\}$ be the groups touched by $v$. Define the following norm of $\xi$:
	\begin{equation*}
		\rho_{s,m}(\xi) := \sup\left\{|v^\top \xi|: v\in\mathcal{V}_{s,m}\right\},\qquad \mathcal{V}_{s,m} := \{v\in\R^p:|v|_2=1, |\mathcal{A}(v)|\leq m, |v|_0\leq s\}.
	\end{equation*}
	We will show first the following deterministic inequality
	\begin{equation}\label{eq:score_bound}
		|v^\top \xi| \leq \rho_{s,m}(\xi) \left(|v|_2+\frac{|v|_1}{\sqrt{s}}+\frac{\|v\|_{2,1}}{\sqrt{m}} \right), \qquad \forall v\in\mathbb R^p.
	\end{equation}
	Fix $v\in\mathbb R^p$ and order the groups as follows:
	\begin{equation*}
		|v_{G_{(1)}}|_2\geq |v_{G_{(2)}}|_2\geq \cdots \geq |v_{G_{(K)}}|_2.
	\end{equation*}
	Split the ordered groups into consecutive blocks of size $m$ with $\ell = 0,1,\ldots$
    \begin{equation*}
        B_\ell := \left\{k\in[K]: \ell m< k \leq \min\{(\ell+1)m,K\}\right\},
        \qquad \mathcal{G}_\ell := \bigcup_{k\in B_\ell}G_{(k)}\subseteq[p].
    \end{equation*}
	Then
    \begin{equation}\label{eq:group_block}
        \sum_{\ell\geq0}|v_{\mathcal{G}_\ell}|_2 = |v_{\mathcal{G}_0}|_2+\sum_{\ell\geq1}|v_{\mathcal{G}_\ell}|_2 \leq |v|_2+\frac{\|v\|_{2,1}}{\sqrt m},
    \end{equation}
	where the inequality follows from the fact that the groups are ordered by $\ell_2$ norm:
	\begin{equation*}
		|v_{\mathcal{G}_\ell}|_2\leq \sqrt{m}\max_{k\in B_{\ell}}|v_{G_{(k)}}|_2 \leq \sqrt{m}\min_{k\in B_{\ell-1}}|v_{G_{(k)}}|_2 \leq \frac{1}{\sqrt m}\sum_{k\in B_{\ell-1}}|v_{G_{(k)}}|_2.
	\end{equation*}

    Now fix $\ell$, order the coordinates in $\mathcal{G}_\ell$ by decreasing absolute value of $v_j$, and split them into consecutive coordinate blocks
    \begin{equation*}
        J_{\ell,0},J_{\ell,1},J_{\ell,2},\ldots,
    \end{equation*}
    each of cardinality at most $s$. Then each $J_{\ell,q}$ is contained in at most $m$ groups and has at most $s$ coordinates. By a similar argument as above,
    \begin{equation*}
        \sum_{q\geq0}|v_{J_{\ell,q}}|_2
        \leq
        |v_{\mathcal{G}_\ell}|_2+\frac{|v_{\mathcal{G}_\ell}|_1}{\sqrt s}.
    \end{equation*}
    Summing over $\ell$ and using the bound in Equation~\eqref{eq:group_block} gives
    \begin{equation*}
        \sum_{\ell\geq0}\sum_{q\geq0}|v_{J_{\ell,q}}|_2
        \leq
        |v|_2+\frac{\|v\|_{2,1}}{\sqrt m}+\frac{|v|_1}{\sqrt s}.
    \end{equation*}
    Since each $v_{J_{\ell,q}}$ is supported on at most $s$ coordinates and at most $m$ groups, we have
    \begin{equation*}
        |\xi^\top v| \leq \sum_{\ell\geq0}\sum_{q\geq0}|\xi^\top v_{J_{\ell,q}}|
        \leq \rho_{s,m}(\xi) \sum_{\ell\geq0}\sum_{q\geq0}|v_{J_{\ell,q}}|_2 \leq \rho_{s,m}(\xi) \left(|v|_2+\frac{|v|_1}{\sqrt{s}}+\frac{\|v\|_{2,1}}{\sqrt{m}}\right),
    \end{equation*}
    which proves the inequality in Equation~\eqref{eq:score_bound}.

	It remains to show that $\rho_{s,m}(\xi) \leq r_{s,m}/2$ with probability at least $1-\epsilon$. To that end, fix $B\subset[K]$ with $|B|=m$, and write $\mathcal{G}_B:=\bigcup_{g\in B}G_g$. By definition of $d_{(m)}$, we have
	\begin{equation*}
		|\mathcal{G}_B|\leq \max_{B\subset[K]:|B|=m}\left|\bigcup_{k\in B}G_k\right| = d_{(m)}.
	\end{equation*}
	For every $J\subset \mathcal{G}_B$ with $|J|\leq s$, let $\mathcal N_J$ be a $1/2$-net of $\{v\in\R^p:|v|_2=1, \mathrm{supp}(v)\subset J\}$ with respect to $\ell_2$ norm. We can choose $\mathcal N_J$ so that its covering number is $|\mathcal N_J|\leq 5^s$; see \cite{Vershynin_2026}, Corollary 4.2.11. Let
    \begin{equation*}
        \mathcal N:=\bigcup_{\substack{B\subset[K]\\ |B|=m}} \bigcup_{\substack{J\subset \mathcal{G}_B\\ |J|\leq s}}\mathcal N_J.
    \end{equation*}
    The cardinality of this net satisfies
    \begin{equation*}
        |\mathcal N|\leq \binom{K}{m}\max_{B\subset[K]:|B|=m}\sum_{j=1}^s\binom{|\mathcal{G}_B|}{j}5^s \leq \left(\frac{eK}{m}\right)^m\left(\frac{5ed_{(m)}}{s}\right)^s,
    \end{equation*}
	where we use $\binom{n}{k}\leq \sum_{j=0}^k\binom{n}{j} \leq \left(\frac{en}{k}\right)^k$ since $s\leq d_{(m)}$; see \cite{Vershynin_2026}, Exercise 0.6.

	For every $v\in\R^p$ satisfying $|v|_2=1$, $|v|_0\leq s$, and $|\mathcal A(v)|\leq m$, its support is contained in $\mathcal{G}_B$ for some $B\subset[K]$ with $|B|=m$. Let $J:=\operatorname{supp}(v)$, so that $J\subset \mathcal G_B$ and $|J|\leq s$. Since $\mathcal N_J$ is a $1/2$-net, there exists $b\in\mathcal N_J\subset\mathcal N$ such that $|v-b|_2\leq 1/2$. Moreover, $v-b$ is supported on $J$, and therefore it is supported on at most $s$ coordinates and touches at most $m$ groups. Hence,
	\begin{equation*}
		|v^\top\xi| \leq |b^\top \xi| + |(v-b)^\top\xi| \leq \max_{c\in\mathcal N}|\xi^\top c| + \rho_{s,m}(\xi)|v-b|_2 \leq \max_{c\in\mathcal N}|\xi^\top c| + \frac{1}{2}\rho_{s,m}(\xi).
	\end{equation*}
	Taking the supremum over all such $v$ gives
	\begin{equation*}
		\rho_{s,m}(\xi)\leq 2\max_{c\in\mathcal N}|\xi^\top c|.
	\end{equation*}
    Therefore, by the union bound, for any $u>0$
    \begin{equation*}
        \Pr(\rho_{s,m}(\xi)>2u)\leq \Pr\left(\max_{c\in\mathcal N}|c^\top\xi|>u\right)\leq |\mathcal{N}|\max_{c\in\mathcal{N}}\Pr\left(|c^\top\xi|>u\right),
    \end{equation*}
	where $|\mathcal N|\leq \left(\frac{eK}{m}\right)^m\left(\frac{5ed_{(m)}}{s}\right)^s$. Under Assumption~\ref{as:fapd_hl_score},
	\begin{equation*}
		\Pr\left(|c^\top\xi|>u\right) \leq 2\exp\left(-\frac{NTu^2}{2\sigma^2}\right),\qquad \forall c\in\mathcal{N}.
	\end{equation*}
	Choosing $u = \sigma\sqrt{\frac{2\log|\mathcal{N}| + 2\log(2/\epsilon)}{NT}}$, we obtain with probability at least $1-\epsilon$,
    \begin{equation*}
        \rho_{s,m}(\xi) \leq 2\sqrt{2}\sigma\sqrt{\frac{m\log(eK/m) + s\log(5ed_{(m)}/s) + \log(2/\epsilon)}{NT}} = \frac{r_{s,m}}{2}.
    \end{equation*}
\end{proof}

\begin{proof}[Proof of Theorem~\ref{thm:fapd_oracle}]
	By the definition of $M_{\hat{\mathbf{F}}}$, concentrating out the factors gives
	\begin{equation*}
		\hat\delta\in\arg\min_{d\in\R^p}
		\|\tilde{\mathbf{y}}-\tilde{\mathbf{Q}}d\|_{NT}^2 + 2\Omega(d),
	\end{equation*}
	where $\Omega(d):=\lambda_1|d|_1+\lambda_2\|d\|_{2,1}$, $\tilde{\mathbf{Q}}=M_{\hat{\mathbf{F}}}\mathbf{Q}$, and
	\begin{equation}\label{eq:fapd_y_tilde}
		\tilde{\mathbf{y}}=M_{\hat{\mathbf{F}}}\mathbf{y} = \tilde{\mathbf{Q}}\delta + \underbrace{M_{\hat{\mathbf{F}}}\mathbf{F}\gamma + M_{\hat{\mathbf{F}}}\mathbf{A}}_{=:\eta} + M_{\hat{\mathbf{F}}}\mathbf E.
	\end{equation}
	By Fermat's rule, $\hat\delta$ satisfies
	\begin{equation*}
		\frac{1}{NT}\tilde{\mathbf{Q}}^\top(\tilde{\mathbf{Q}}\hat\delta-\tilde{\mathbf{y}}) + z^*=0,\qquad z^*\in\partial\Omega(\hat\delta),
	\end{equation*}
	where $\partial\Omega(\hat\delta)$ is the subdifferential of $\Omega$ at $\hat\delta$.
	The inner product with $\delta-\hat\delta$ and convexity of $d\mapsto\Omega(d)$ give
	\begin{equation}\label{eq:fapd_kkt}
		\left\langle \tilde{\mathbf{y}}-\tilde{\mathbf{Q}}\hat\delta,\tilde{\mathbf{Q}}(\delta-\hat\delta)\right\rangle_{NT} = \left\langle z^*,\delta-\hat\delta\right\rangle \leq \Omega(\delta)-\Omega(\hat\delta),
	\end{equation}
	Setting $\Delta:=\hat\delta-\delta$ and substituting the expression for $\tilde{\mathbf{y}}$ in Equation~\eqref{eq:fapd_y_tilde}, we obtain
	\begin{equation*}
		\begin{aligned}
			\|\tilde{\mathbf{Q}}\Delta\|_{NT}^2 & \leq \left\langle \mathbf E,\tilde{\mathbf{Q}}\Delta\right\rangle_{NT} + \left\langle\eta,\tilde{\mathbf{Q}}\Delta\right\rangle_{NT} + \Omega(\delta)-\Omega(\hat\delta) \\
			& \leq \left\langle\mathbf E,\tilde{\mathbf{Q}}\Delta\right\rangle_{NT} + \left\langle \eta,\tilde{\mathbf{Q}}\Delta\right\rangle_{NT} + \lambda_1\{|\Delta_S|_1-|\Delta_{S^c}|_1\} + \lambda_2\{\|\Delta_{\mathcal A}\|_{2,1} - \|\Delta_{\mathcal A^c}\|_{2,1}\},
		\end{aligned}
	\end{equation*}
	where the second line follows since $\delta_{S^c}=0$ and $\delta_{\mathcal{A}^c}=0$, so by the reverse triangle inequality
	\begin{equation*}
		\begin{aligned}
			|\delta|_1-|\delta+\Delta|_1 & \leq
			|\Delta_S|_1-|\Delta_{S^c}|_1, \\
			\|\delta\|_{2,1}-\|\delta+\Delta\|_{2,1}
			& \leq \|\Delta_{\mathcal A}\|_{2,1} - \|\Delta_{\mathcal A^c}\|_{2,1}.
		\end{aligned}
	\end{equation*}
	Moreover, under Assumption~\ref{as:fapd_hl_score} by Lemma~\ref{lem:fapd_score_bound}, with probability at least $1-\epsilon$,
	\begin{equation*}
		\left\langle\mathbf E,\tilde{\mathbf{Q}}v\right\rangle_{NT}
		\leq \frac{r_{s,m}}{2}
		\left(|v|_2+\frac{|v|_1}{\sqrt{s}}+\frac{\|v\|_{2,1}}{\sqrt{m}}\right) \qquad \text{for all }v\in\R^p.
	\end{equation*}
	Using $\lambda_1=\frac{r_{s,m}}{\sqrt{s}}$ and $\lambda_2=\frac{r_{s,m}}{\sqrt{m}}$, under Assumption~\ref{as:tuning}, and $|\Delta_S|_1\leq \sqrt{s}|\Delta|_2$, $\|\Delta_{\mathcal A}\|_{2,1}\leq \sqrt{m}|\Delta|_2$, we obtain
	\begin{equation}\label{eq:basic_inequality}
		\begin{aligned}
			\|\tilde{\mathbf Q}\Delta\|_{NT}^2 & \leq 3.5r_{s,m}|\Delta|_2 - \frac{r_{s,m}}{2}\left(\frac{|\Delta_{S^c}|_1}{\sqrt{s}} + \frac{\|\Delta_{\mathcal A^c}\|_{2,1}}{\sqrt{m}}\right) + \left\|\tilde{\mathbf{Q}}\Delta\right\|_{NT}\|\eta\|_{NT}.
		\end{aligned}
	\end{equation}

	\medskip

	\noindent \emph{Case 1:} Suppose that
	\begin{equation*}
		\frac{|\Delta|_1}{\sqrt{s}} + \frac{\|\Delta\|_{2,1}}{\sqrt{m}} \leq 10|\Delta|_2
	\end{equation*}
	Assumption~\ref{as:fapd_hl_re} then gives $\kappa|\Delta|_2\leq \|\tilde{\mathbf{Q}}\Delta\|_{NT}$. Dropping the term with negative sign in Equation~\eqref{eq:basic_inequality} gives
	\begin{equation*}
		\|\tilde{\mathbf{Q}}\Delta\|_{NT}^2 \leq \frac{3.5r_{s,m}}{\kappa}\|\tilde{\mathbf{Q}}\Delta\|_{NT} + \|\tilde{\mathbf{Q}}\Delta\|_{NT}\|\eta\|_{NT}.
	\end{equation*}
	Then
	\begin{equation*}
		\|\tilde{\mathbf{Q}}\Delta\|_{NT} \leq \frac{3.5r_{s,m}}{\kappa} + \|\eta\|_{NT}
	\end{equation*}
	and
	\begin{equation*}
		\frac{|\Delta|_1}{\sqrt{s}} + \frac{\|\Delta\|_{2,1}}{\sqrt{m}} \leq \frac{35r_{s,m}}{\kappa^2} + \frac{10}{\kappa}\|\eta\|_{NT} \leq \frac{40r_{s,m}}{\kappa^2} + \frac{5}{r_{s,m}}\|\eta\|_{NT}^2,
	\end{equation*}
	where the last inequality follows from $2ab\leq a^2+b^2$.

	\medskip

	\noindent \emph{Case 2:} Now suppose that
	\begin{equation*}
		\frac{|\Delta|_1}{\sqrt{s}} + \frac{\|\Delta\|_{2,1}}{\sqrt{m}} > 10|\Delta|_2.
	\end{equation*}
	Since
	\begin{equation*}
		\frac{|\Delta_S|_1}{\sqrt{s}} + \frac{\|\Delta_{\mathcal A}\|_{2,1}}{\sqrt{m}} \leq 2|\Delta|_2,
	\end{equation*}
	we have
	\begin{equation*}
		\begin{aligned}
			\|\tilde{\mathbf Q}\Delta\|_{NT}^2 & \leq 4.5r_{s,m}|\Delta|_2 - \frac{r_{s,m}}{2}\left(\frac{|\Delta|_1}{\sqrt{s}} + \frac{\|\Delta\|_{2,1}}{\sqrt{m}}\right) + \left\|\tilde{\mathbf{Q}}\Delta\right\|_{NT}\|\eta\|_{NT} \\
			& \leq -\frac{r_{s,m}}{20}\left(\frac{|\Delta|_1}{\sqrt{s}} + \frac{\|\Delta\|_{2,1}}{\sqrt{m}}\right) + \left\|\tilde{\mathbf{Q}}\Delta\right\|_{NT}\|\eta\|_{NT}.
		\end{aligned}
	\end{equation*}
	This shows that $\|\tilde{\mathbf{Q}}\Delta\|_{NT}\leq \|\eta\|_{NT}$ and that
	\begin{equation*}
		\frac{|\Delta|_1}{\sqrt{s}} + \frac{\|\Delta\|_{2,1}}{\sqrt{m}} \leq \frac{20}{r_{s,m}}\left(\left\|\tilde{\mathbf{Q}}\Delta\right\|_{NT}\|\eta\|_{NT} - \left\|\tilde{\mathbf{Q}}\Delta\right\|_{NT}^2\right) \leq \frac{5\|\eta\|_{NT}^2}{r_{s,m}}.
	\end{equation*}
	To sum up, in both cases, we have
	\begin{equation*}
		\|\tilde{\mathbf{Q}}\Delta\|_{NT} \leq \frac{3.5r_{s,m}}{\kappa} + \|\eta\|_{NT}
	\end{equation*}
	and
	\begin{equation*}
		\frac{|\Delta|_1}{\sqrt{s}} + \frac{\|\Delta\|_{2,1}}{\sqrt{m}} \leq \frac{40r_{s,m}}{\kappa^2} + \frac{5\|\eta\|_{NT}^2}{r_{s,m}}.
	\end{equation*}

	Next,
	\begin{equation*}
		\hat{\mathbf F}\hat\gamma = P_{\hat{\mathbf{F}}}(\mathbf y-\mathbf Q\hat\delta) = P_{\hat{\mathbf{F}}}\mathbf Q(\delta-\hat\delta) + P_{\hat{\mathbf{F}}}(\mathbf F\gamma + \mathbf A) + P_{\hat{\mathbf{F}}}\mathbf E,
	\end{equation*}
	so that
	\begin{equation*}
		\begin{aligned}
			\mathbf Q(\hat\delta-\delta) + \hat{\mathbf F}\hat\gamma - \mathbf F\gamma & = M_{\hat{\mathbf{F}}}\mathbf Q(\hat\delta-\delta) + M_{\hat{\mathbf{F}}}(\hat{\mathbf F}\hat \gamma - {\mathbf{F}}\gamma) + P_{\hat{\mathbf{F}}}\mathbf Q(\hat\delta-\delta) + P_{\hat{\mathbf{F}}}(\hat{\mathbf F}\hat \gamma - {\mathbf{F}}\gamma) \\
			& = \tilde {\mathbf Q}(\hat\delta-\delta) - M_{\hat{\mathbf{F}}}\mathbf F\gamma + P_{\hat{\mathbf{F}}}\mathbf Q(\hat\delta-\delta) + \hat{\mathbf F}\hat \gamma - P_{\hat{\mathbf{F}}}{\mathbf{F}}\gamma \\
			& = \tilde {\mathbf Q}(\hat\delta-\delta) - M_{\hat{\mathbf{F}}}\mathbf F\gamma +  P_{\hat{\mathbf{F}}}\mathbf A + P_{\hat{\mathbf{F}}}\mathbf E.\\
		\end{aligned}
	\end{equation*}
	Therefore, by the triangle inequality,
	\begin{equation*}
		\begin{aligned}
			\left\|\mathbf Q\hat\delta+\hat{\mathbf F}\hat\gamma - \mathbf Q\delta-\mathbf F\gamma
			\right\|_{NT} & \leq \frac{3.5r_{s,m}}{\kappa} + 2\|M_{\hat{\mathbf{F}}}\mathbf F\gamma\|_{NT} + \sqrt{2}\|\mathbf A\|_{NT} + \|P_{\hat{\mathbf{F}}}\mathbf{E}\|_{NT}.
		\end{aligned}
	\end{equation*}
\end{proof}

\subsection{Proof of Corollary~\ref{cor:fapd_rate}}

Let $\hat D=\mathrm{diag}(\sigma_1^2(\mathbf{X}),\ldots,\sigma_R^2(\mathbf{X}))$ be a diagonal matrix with diagonal entries equal to the squares of the largest $R$ singular values of $\mathbf{X}$.
\begin{lemma}\label{lem:factor_error_3}
	Under Assumption~\ref{as:fapd_factor} (i) and (ii), we have
	\begin{itemize}
		\item[(i)] $\|\mathbf{U}B\|_{\rm op}^2 = O_P({p_xNT})$;
		\item[(ii)] $\|\mathbf{U}\|_{\rm op}^2 = O_P\left(p_x(NT)^{1/2}+p_x^{1/2}NT\right)$;
		\item[(iii)] $\|\hat D^{-1}\|_{\rm op} = O_P\left(\frac{1}{p_xNT}\right)$.
	\end{itemize}
\end{lemma}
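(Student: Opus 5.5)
For part (i) I would bound the operator norm by the Frobenius norm and take expectations conditionally on $B$:
\[
\E\left[\|\mathbf{U}B\|_F^2\mid B\right]=\sum_{i,t}\E\left[|B^\top u_{i,t}|_2^2\mid B\right]=\sum_{i,t}\mathrm{tr}\left(B^\top\E(u_{i,t}u_{i,t}^\top\mid B)B\right).
\]
By Assumption~\ref{as:fapd_factor}(ii), each summand is at most $C\,\mathrm{tr}(B^\top B)$. By Assumption~\ref{as:fapd_factor}(i), $\mathrm{tr}(B^\top B)=p_x\,\mathrm{tr}(D)(1+o_P(1))=O_P(p_x)$, since $R$ is fixed. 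Summing over the $NT$ observations and applying the conditional Markov inequality then gives $\|\mathbf{U}B\|_{\rm op}^2\leq\|\mathbf{U}B\|_F^2=O_P(p_xNT)$.

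For part (ii) I would use $\|\mathbf{U}\|_{\rm op}^2=\|\mathbf{U}\mathbf{U}^\top\|_{\rm op}$ and split the $NT\times NT$ matrix $\mathbf{U}\mathbf{U}^\top$ into its diagonal and off-diagonal parts. The diagonal part has operator norm $\max_{i,t}|u_{i,t}|_2^2$. Its entries have $\E|u_{i,t}|_2^2\leq Cp_x$, and their second moments are controlled by Assumption~\ref{as:fapd_factor}(vi). The off-diagonal part I would bound by its Frobenius norm. Its squared Frobenius norm is $\sum_{(i,s)\neq(j,t)}(u_{i,s}^\top u_{j,t})^2$, whose expectation is $O((NT)^2p_x)$ by (vi). This gives $O_P(NT\sqrt{p_x})$, which is the $p_x^{1/2}NT$ term.

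The diagonal term is the delicate step. A crude union bound on $\max_{i,t}|u_{i,t}|_2^2$ using fourth moments gives $\max_{i,t}|u_{i,t}|_2^2=O_P(p_x\sqrt{NT})$. I would argue this via $\E\max_{i,t}|u_{i,t}|_2^4\leq NT\max_{i,t}\E|u_{i,t}|_2^4=O(NTp_x^2)$, and the square root gives $p_x(NT)^{1/2}$, the other term in (ii). This bound is exactly what the statement asks for, so I would commit to it. Alternatively, one can bound the diagonal part in Frobenius norm by $(\sum_{i,t}|u_{i,t}|_2^4)^{1/2}=O_P(p_x\sqrt{NT})$, which gives the same rate. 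Note that (ii) needs (vi), although the statement cites only (i)–(ii). I would invoke (vi) explicitly, since the corollary assumes all of Assumption~\ref{as:fapd_factor}.

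For part (iii), $\|\hat D^{-1}\|_{\rm op}=1/\sigma_R^2(\mathbf{X})$, so it suffices to show that $\sigma_R(\mathbf{X})\geq c\sqrt{p_xNT}$ with probability approaching one. By Weyl's inequality, $\sigma_R(\mathbf{X})\geq\sigma_R(\mathbf{F}B^\top)-\|\mathbf{U}\|_{\rm op}$. Moreover,
\[
\sigma_R^2(\mathbf{F}B^\top)=\lambda_{\min}\left(\tfrac{1}{NT}\mathbf{F}^\top\mathbf{F}\cdot NT\cdot B^\top B\right)\gtrsim NTp_x\lambda_{\min}(D)
\]
by (i), using that the eigenvalues of $(\mathbf{F}^\top\mathbf{F})^{1/2}B^\top B(\mathbf{F}^\top\mathbf{F})^{1/2}$ are those of the product. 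From part (ii), $\|\mathbf{U}\|_{\rm op}^2/(p_xNT)=O_P((NT)^{-1/2}+p_x^{-1/2})=o_P(1)$ as $N,T,p_x\to\infty$, so the noise is negligible. This yields $\sigma_R^2(\mathbf{X})\geq\tfrac12 p_xNT\lambda_{\min}(D)$ with probability approaching one, and hence (iii). The main obstacle is therefore part (ii): the order claimed there is exactly what makes the Weyl perturbation step negligible in (iii), and obtaining it requires the moment conditions in (vi).
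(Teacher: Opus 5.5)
Your proposal is correct and follows the paper's proof essentially step for step: a conditional Frobenius/Markov bound for (i); a fourth-moment bound on the Gram matrix $\mathbf{U}\mathbf{U}^\top$ with diagonal ($O(p_x^2NT)$) and off-diagonal ($O(p_xN^2T^2)$) contributions for (ii); and Weyl's inequality with $\sigma_R(\mathbf{F}B^\top)\gtrsim\sqrt{p_xNT}$ for (iii). The only cosmetic difference is that the paper bounds $\E\|\mathbf{U}^\top\mathbf{U}\|_F^2$ in one step rather than splitting before taking norms; you are also right that part (ii), and hence (iii), actually uses Assumption~\ref{as:fapd_factor}(vi), which the paper's proof likewise invokes even though the lemma cites only (i)--(ii).
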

\begin{proof}[Proof of Lemma~\ref{lem:factor_error_3}]
	We prove each of the claims separately.

	\medskip
	\noindent\emph{Proof of (i).} Write $B=[b_1,\dots,b_R]$. Under Assumption~\ref{as:fapd_factor} (ii), we have
	\begin{equation*}
		\begin{aligned}
			\E\left[\|\mathbf{U}B\|_{\rm op}^2\mid B\right] & \leq \E\left[\|\mathbf{U}B\|_F^2\mid B\right] = \sum_{i,t,r}\E(u_{i,t}^\top b_r|B)^2 = \sum_{i,t,r}b_r^\top\E(u_{i,t}u_{i,t}^\top|B)b_r \leq CNT\sum_{r=1}^Rb_r^\top b_r.
		\end{aligned}
	\end{equation*}
	The result then follows under Assumption~\ref{as:fapd_factor} (i) and Markov's inequality.

	\medskip
	\noindent\emph{Proof of (ii).} The result follows by Markov's inequality and
	\begin{equation*}
		\begin{aligned}
			\E\|\mathbf{U}\|_{\rm op}^4 & = \E\|\mathbf{U}^\top\mathbf{U}\|_{\rm op}^2 \leq \E\|\mathbf{U}^\top\mathbf{U}\|_F^2 = \E\mathrm{tr}(\mathbf{U}\mathbf{U}^\top\mathbf{U}\mathbf{U}^\top) = \sum_{i,s}\sum_{j,t}\E(u_{i,s}^\top u_{j,t})^2 \\
			& = \sum_{i=1}^N\sum_{s=1}^T\E|u_{i,s}|_2^4 + \sum_{(i,s)\ne (j,t)}\E(u_{i,s}^\top u_{j,t})^2 = O(p_x^2NT) + O(p_xN^2T^2),
		\end{aligned}
	\end{equation*}
	where the last line follows by Assumption~\ref{as:fapd_factor} (vi).

	\medskip
	\noindent\emph{Proof of (iii).} By Weyl's inequality for singular values, see \cite{horn2013matrix}, Corollary 7.3.5,
	\begin{equation*}
		\begin{aligned}
			\left|\sigma_{R}(\mathbf{X}) - \sigma_R(\mathbf{F}B^\top)\right| & \leq \|\mathbf{U}\|_{\rm op} = o_P\left(\sqrt{p_xNT}\right),\qquad p_x,NT\to\infty,
		\end{aligned}
	\end{equation*}
	which follows by (ii). Therefore, under Assumption~\ref{as:fapd_factor} (i), we have
	\begin{equation*}
		\sigma_R(\mathbf{X}) \geq \sigma_R(\mathbf{F}B^\top) - o_P\left(\sqrt{p_xNT}\right) \geq \sigma_R(\mathbf{F})\sigma_R(B) - o_P\left(\sqrt{p_xNT}\right)\geq c\sqrt{p_xNT}
	\end{equation*}
	with probability approaching one. The result follows since $\|\hat D^{-1}\|_{\rm op} = \frac{1}{\sigma_R^2(\mathbf{X})}$.
\end{proof}

\begin{lemma}\label{lem:factor_error_2}
	Under the assumptions of Lemma~\ref{lem:factor_error_3}, we have
	\begin{equation*}
		\frac{1}{\sqrt{NT}}\|\hat{\mathbf{F}}-\mathbf{F}H^\top\|_{\rm op} = O_P\left(\frac{1}{\sqrt{p_x}} + {\frac{1}{\sqrt{NT}}}\right),
	\end{equation*}
	where $H^\top = B^\top B\mathbf{F}^\top\hat{\mathbf{F}}\hat D^{-1}$, $\hat D=\mathrm{diag}(\hat\sigma_1^2(\mathbf{X}),\ldots,\hat\sigma_R^2(\mathbf{X}))$.
\end{lemma}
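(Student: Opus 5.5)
The plan is the standard Bai--Ng eigen-decomposition argument. By definition of PCA, $\mathbf{X}\mathbf{X}^\top\hat{\mathbf F}=\hat{\mathbf F}\hat D$, where the columns of $\hat{\mathbf F}/\sqrt{NT}$ are the top $R$ eigenvectors and $\hat D$ collects the corresponding eigenvalues $\sigma_r^2(\mathbf X)$. Then $\hat{\mathbf F}=\mathbf X\mathbf X^\top\hat{\mathbf F}\hat D^{-1}$. Substituting $\mathbf X=\mathbf FB^\top+\mathbf U$ from Equation~\eqref{eq:factor_model} and expanding gives
\begin{equation*}
\hat{\mathbf F}-\mathbf FH^\top = \left(\mathbf FB^\top\mathbf U^\top + \mathbf UB\mathbf F^\top + \mathbf U\mathbf U^\top\right)\hat{\mathbf F}\hat D^{-1},
\end{equation*}
since $\mathbf FB^\top B\mathbf F^\top\hat{\mathbf F}\hat D^{-1}=\mathbf FH^\top$ by the definition of $H$. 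I will bound each of the three terms in operator norm, using submultiplicativity, $\|\hat{\mathbf F}\|_{\rm op}=\sqrt{NT}$ (from $\hat{\mathbf F}^\top\hat{\mathbf F}/NT=I_R$), $\|\mathbf F\|_{\rm op}=O_P(\sqrt{NT})$ (Assumption~\ref{as:fapd_factor}(i)), and $\|\hat D^{-1}\|_{\rm op}=O_P(1/(p_xNT))$ from Lemma~\ref{lem:factor_error_3}(iii).

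The first two terms are symmetric. For $\mathbf FB^\top\mathbf U^\top\hat{\mathbf F}\hat D^{-1}$ the norm is at most $\|\mathbf F\|_{\rm op}\|\mathbf UB\|_{\rm op}\|\hat{\mathbf F}\|_{\rm op}\|\hat D^{-1}\|_{\rm op}$. By Lemma~\ref{lem:factor_error_3}(i), $\|\mathbf UB\|_{\rm op}=O_P(\sqrt{p_xNT})$, so this term is
\begin{equation*}
O_P\left(\sqrt{NT}\sqrt{p_xNT}\sqrt{NT}/(p_xNT)\right)=O_P\left(\sqrt{NT/p_x}\right).
\end{equation*}
After dividing by $\sqrt{NT}$ it contributes $O_P(p_x^{-1/2})$. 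The term $\mathbf UB\mathbf F^\top\hat{\mathbf F}\hat D^{-1}$ is bounded identically.

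The third term is $\|\mathbf U\|_{\rm op}^2\sqrt{NT}\,\|\hat D^{-1}\|_{\rm op}$. By Lemma~\ref{lem:factor_error_3}(ii) this is
\begin{equation*}
O_P\left((p_x\sqrt{NT}+\sqrt{p_x}NT)\sqrt{NT}/(p_xNT)\right).
\end{equation*}
After dividing by $\sqrt{NT}$ it becomes $O_P\left((NT)^{-1/2}+p_x^{-1/2}\right)$. Adding the three contributions gives the claim.

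The main subtlety is bookkeeping rather than a genuine obstacle. First, $B^\top B$ is not normalized, so the scale $p_x$ must cancel against $\hat D^{-1}$; this is where Lemma~\ref{lem:factor_error_3}(iii) is essential. Second, Lemma~\ref{lem:factor_error_3}(ii) uses Assumption~\ref{as:fapd_factor}(vi), so I will take the assumptions of that lemma to include (vi). Third, $\|B\|_{\rm op}=O_P(\sqrt{p_x})$ follows from (i); it is only needed if one bounds $\mathbf UB$ crudely instead of invoking part (i) of the lemma.
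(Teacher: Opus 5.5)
Your proposal is correct and follows the paper's proof essentially line by line. It uses the same eigen-equation $\hat{\mathbf F}=\mathbf X\mathbf X^\top\hat{\mathbf F}\hat D^{-1}$, the same three-term decomposition, and the same operator-norm bounds via Lemma~\ref{lem:factor_error_3}(i)--(iii), $\|\hat{\mathbf F}\|_{\rm op}=\sqrt{NT}$ and $\|\mathbf F\|_{\rm op}=O_P(\sqrt{NT})$; your remark that part (ii) of Lemma~\ref{lem:factor_error_3} actually relies on Assumption~\ref{as:fapd_factor}(vi), which that lemma's stated hypotheses omit, correctly flags a small inconsistency in the paper.
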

\begin{proof}
	Using the factor model in Equation~\eqref{eq:factor_model}, expand
	\begin{equation*}
		\mathbf{X}\mathbf{X}^\top = \mathbf{F}B^\top B\mathbf{F}^\top + \mathbf{F}B^\top\mathbf{U}^\top + \mathbf{U}B\mathbf{F}^\top + \mathbf{U}\mathbf{U}^\top.
	\end{equation*}
	Since the columns of $\hat{\mathbf{F}}/\sqrt{NT}$ are the eigenvectors associated with $\hat D$, we have $\mathbf{X}\mathbf{X}^\top\hat{\mathbf{F}} = \hat{\mathbf{F}}\hat D$, and
	\begin{equation*}
		\hat{\mathbf{F}} = \mathbf{F}B^\top B\mathbf{F}^\top\hat{\mathbf{F}}\hat D^{-1} + \mathbf{F}B^\top\mathbf{U}^\top\hat{\mathbf{F}}\hat D^{-1} + \mathbf{U}B\mathbf{F}^\top\hat{\mathbf{F}}\hat D^{-1} + \mathbf{U}\mathbf{U}^\top\hat{\mathbf{F}}\hat D^{-1}.
	\end{equation*}
	or equivalently,
	\begin{equation*}
		\begin{aligned}
			\hat{\mathbf{F}} - \mathbf{F}H^\top & = \mathbf{F}B^\top\mathbf{U}^\top\hat{\mathbf{F}}\hat D^{-1} + \mathbf{U}B\mathbf{F}^\top\hat{\mathbf{F}}\hat D^{-1} + \mathbf{U}\mathbf{U}^\top\hat{\mathbf{F}}\hat D^{-1} \\
			& =: A_1 + A_2 + A_3.
		\end{aligned}
	\end{equation*}
	We bound each of the three terms separately. First, by Lemma~\ref{lem:factor_error_3} (i) and (iii), we have
	\begin{equation*}
		\begin{aligned}
			\|A_1\|_{\rm op} & \leq \|\mathbf{F}\|_{\rm op}\|B^\top\mathbf{U}^\top\|_{\rm op}\|\hat{\mathbf{F}}\|_{\rm op}\|\hat D^{-1}\|_{\rm op} \\
			& = O_P(\sqrt{RNT})O_P(\sqrt{p_xNTR})O_P(\sqrt{NT})O_P\left(\frac{1}{p_xNT}\right) \\
			& = O_P\left(R\sqrt{\frac{NT}{p_x}}\right),
		\end{aligned}
	\end{equation*}
	where the second line follows from $\|\hat{\mathbf{F}}\|_{\rm op}^2 = \|\hat{\mathbf F}^\top\hat{\mathbf{F}}\|_{\rm op} = NT$, and
	\begin{equation*}
		\|\mathbf{F}\|^2_{\rm op} \leq \|\mathbf{F}\|_{F}^2 = \operatorname{tr}(\mathbf{F}^\top\mathbf{F}) = O_P(RNT)
	\end{equation*}
	under Assumption~\ref{as:fapd_factor} (i).

	Second, by Lemma~\ref{lem:factor_error_3} (i) and (iii), we have
	\begin{equation*}
		\begin{aligned}
			\|A_2\|_{\rm op} & \leq \|\mathbf{U}B\|_{\rm op}\|\mathbf{F}^\top\|_{\rm op}\|\hat{\mathbf{F}}\|_{\rm op}\|\hat D^{-1}\|_{\rm op} \\
			& = O_P\left(\sqrt{p_xNTR}\right)O_P(\sqrt{RNT})O_P(\sqrt{NT})O_P\left(\frac{1}{p_xNT}\right) \\
			& = O_P\left(R\sqrt{\frac{NT}{p_x}}\right).
		\end{aligned}
	\end{equation*}
	Third, by Lemma~\ref{lem:factor_error_3} (ii) and (iii), we have
	\begin{equation*}
		\begin{aligned}
			\|A_3\|_{\rm op} & \leq \|\mathbf{U}\mathbf{U}^\top\|_{\rm op}\|\hat{\mathbf{F}}\|_{\rm op}\|\hat D^{-1}\|_{\rm op} \\
			& = O_P\left(p_x^{1/2}NT + p_x(NT)^{1/2}\right)O_P(\sqrt{NT})O_P\left(\frac{1}{p_xNT}\right) \\
			& = O_P\left(\sqrt{\frac{NT}{p_x}} + 1\right).
		\end{aligned}
	\end{equation*}
	Combining the bounds for $A_1$, $A_2$, and $A_3$ yields the desired result.
\end{proof}

\begin{lemma}\label{lem:factor_error}
	Under Assumptions~\ref{as:fapd_hl_score} and \ref{as:fapd_factor}, we have
	\begin{itemize}
		\item[(i)] $\|M_{\hat{\mathbf{F}}}\mathbf{F}\gamma\|_{NT}^2 = O_P\left(\frac{1}{p_x} + \frac{1}{NT}\right)$;
		\item[(ii)] $\|P_{\hat{\mathbf{F}}}\mathbf E\|_{NT}^2 = O_P\left(\frac{1}{NT}\right)$.
	\end{itemize}
\end{lemma}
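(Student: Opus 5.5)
For part (i), I will use the rotation $H$ from Lemma~\ref{lem:factor_error_2} together with the fact that $M_{\hat{\mathbf F}}$ annihilates the column space of $\hat{\mathbf F}$. The key observation is that $\mathbf F\gamma$ is close to the column space of $\hat{\mathbf F}$ after rotating by $H$. Assume for the moment that $H$ is invertible with probability approaching one and $\|H^{-1}\|_{\rm op}=O_P(1)$. Then
\[
\mathbf F\gamma=\mathbf F H^\top (H^\top)^{-1}\gamma=\hat{\mathbf F}(H^\top)^{-1}\gamma-(\hat{\mathbf F}-\mathbf FH^\top)(H^\top)^{-1}\gamma ,
\]
so that
\[
M_{\hat{\mathbf F}}\mathbf F\gamma=-M_{\hat{\mathbf F}}(\hat{\mathbf F}-\mathbf FH^\top)(H^\top)^{-1}\gamma .
\]
Since $\|M_{\hat{\mathbf F}}\|_{\rm op}\le 1$, this yields
\[
\|M_{\hat{\mathbf F}}\mathbf F\gamma\|_{NT}\le \frac{1}{\sqrt{NT}}\|\hat{\mathbf F}-\mathbf FH^\top\|_{\rm op}\,\|H^{-1}\|_{\rm op}\,|\gamma|_2 .
\]
Lemma~\ref{lem:factor_error_2} bounds the first factor, and Assumption~\ref{as:fapd_factor}(iii) gives $|\gamma|_2=O(1)$. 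Squaring then gives $O_P(p_x^{-1}+(NT)^{-1})$.

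The main obstacle is showing that $H$ is invertible with a bounded inverse. I would proceed as follows.
\begin{itemize}
\item First, $\|H\|_{\rm op}=O_P(1)$. This follows from $\|B^\top B\|_{\rm op}=O_P(p_x)$, $\|\mathbf F^\top\hat{\mathbf F}\|_{\rm op}\le\|\mathbf F\|_{\rm op}\|\hat{\mathbf F}\|_{\rm op}=O_P(NT)$, and Lemma~\ref{lem:factor_error_3}(iii).
\item Next, multiply the identity $\hat{\mathbf F}=\mathbf FH^\top+(\hat{\mathbf F}-\mathbf FH^\top)$ on the left by $\hat{\mathbf F}^\top/(NT)$. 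This gives
\[
I_R=\tfrac{1}{NT}\hat{\mathbf F}^\top\mathbf F H^\top+o_P(1),
\]
because $\|\hat{\mathbf F}\|_{\rm op}/\sqrt{NT}=1$ and the error term is $o_P(1)$ by Lemma~\ref{lem:factor_error_2}, as $p_x,NT\to\infty$.
\item Since $\|\hat{\mathbf F}^\top\mathbf F/(NT)\|_{\rm op}=O_P(1)$, the relation above forces $\sigma_{\min}(H)$ to be bounded away from zero with probability approaching one.
\end{itemize}

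For part (ii), I would condition on $\mathbf X$, of which $\hat{\mathbf F}$ is a measurable function. Writing $P_{\hat{\mathbf F}}=\hat{\mathbf F}\hat{\mathbf F}^\top/(NT)$, we have
\[
\|P_{\hat{\mathbf F}}\mathbf E\|_{NT}^2=\frac{1}{NT}\mathbf E^\top P_{\hat{\mathbf F}}\mathbf E
=\frac{1}{(NT)^2}\bigl|\hat{\mathbf F}^\top\mathbf E\bigr|_2^2 .
\]
The conditional expectation given $\mathbf X$ is
\[
\frac{1}{NT}\operatorname{tr}\bigl(P_{\hat{\mathbf F}}\mathrm{Var}(\mathbf E\mid\mathbf X)\bigr)+\frac{1}{NT}\bigl|P_{\hat{\mathbf F}}\E[\mathbf E\mid\mathbf X]\bigr|_2^2 .
\]
\begin{itemize}
\item The first term is at most $C\operatorname{tr}(P_{\hat{\mathbf F}})/(NT)=CR/(NT)$ by Assumption~\ref{as:fapd_factor}(v).
\item The second term vanishes because the sub-Gaussian condition in Assumption~\ref{as:fapd_factor}(iv) implies $\E[\mathbf E\mid\mathbf X]=0$. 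Indeed, an MGF bounded by $e^{\sigma^2u^2|a|_2^2/2}$ for all $u$ forces zero mean.
\end{itemize}
The conditional Markov inequality then gives $O_P(1/(NT))$, with $R$ fixed.
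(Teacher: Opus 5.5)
Your proof is correct. Part (i) takes a genuinely different route from the paper's.

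\emph{Part (i).} The paper never needs $H$ to be invertible. It writes $M_{\hat{\mathbf F}}\mathbf F\gamma=(P_{\mathbf F}-P_{\hat{\mathbf F}})\mathbf F\gamma$ and bounds $\|P_{\hat{\mathbf F}}-P_{\mathbf F}\|_{\mathrm{op}}$ by a constant times $\|P_{\hat{\mathbf F}}M_{\mathbf F}\|_{\mathrm{op}}$, using the Stewart--Sun identity for orthogonal projections of equal rank. It then uses $\mathbf F^\top M_{\mathbf F}=0$ to replace $\hat{\mathbf F}^\top M_{\mathbf F}$ by $(\hat{\mathbf F}-\mathbf F H^\top)^\top M_{\mathbf F}$, so Lemma~\ref{lem:factor_error_2} applies whatever $H$ is.

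You instead pay for an extra step: a lower bound on $\sigma_{\min}(H)$, obtained from $I_R=(NT)^{-1}\hat{\mathbf F}^\top\mathbf F H^\top+o_P(1)$ together with $\|(NT)^{-1}\hat{\mathbf F}^\top\mathbf F\|_{\mathrm{op}}=O_P(1)$. You carry this out correctly. Your first bullet, $\|H\|_{\mathrm{op}}=O_P(1)$, is in fact never used.

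The trade-off is as follows. The paper's route is shorter and needs only that $\mathbf F$ has full column rank with probability approaching one, so that the two projections have equal rank; but it relies on a matrix-perturbation identity. Yours is fully elementary. It also yields, as a by-product, the asymptotic nonsingularity of the rotation $H$, which is the usual identification statement for principal-component factor estimates.

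\emph{Part (ii).} You follow the paper's expectation computation but finish with the conditional Markov inequality rather than Hanson--Wright. Markov is all an $O_P$ bound requires, so you no longer need the conditional independence of the entries of $\mathbf E$ in Assumption~\ref{as:fapd_factor}(v). You also make explicit that $\E[\mathbf E\mid\mathbf X]=0$ follows from the MGF bound in (iv). The paper uses this fact tacitly when it identifies $\E[\mathbf E^\top P_{\hat{\mathbf F}}\mathbf E\mid\mathbf X]$ with $\operatorname{tr}(P_{\hat{\mathbf F}}\operatorname{Var}(\mathbf E\mid\mathbf X))$.
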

\begin{proof}[Proof of Lemma~\ref{lem:factor_error}]
	We prove the two claims separately.

	\medskip
	\noindent\emph{Proof of (i).} Since $P_{{\mathbf{F}}}{\mathbf{F}}=\mathbf{F}$, we have
	\begin{equation*}
		\begin{aligned}
			\|M_{\hat{\mathbf{F}}}\mathbf{F}\gamma\|_{NT}^2 & = \|(P_{\mathbf{F}} - P_{\hat{\mathbf{F}}})\mathbf{F}\gamma\|_{NT}^2 \leq \|P_{\mathbf{F}} - P_{\hat{\mathbf{F}}}\|_{\rm op}^2\|\mathbf{F}\gamma\|_{NT}^2,
		\end{aligned}
	\end{equation*}
	where $\|\mathbf{F}\gamma\|_{NT}^2 = \frac{1}{NT}\gamma^\top\mathbf{F}^\top\mathbf{F}\gamma = O_P(1)$ under Assumption~\ref{as:fapd_factor} (i) and (iii). Next, by \citet{stewart1990matrix}, Theorem I.5.5,
	\begin{equation*}
		\begin{aligned}
			\|P_{\hat{\mathbf{F}}} - P_{\mathbf{F}}\|_{\rm op} & = \sqrt{2}\|P_{\hat{\mathbf{F}}}M_{\mathbf{F}}\|_{\rm op} = \frac{\sqrt{2}}{NT}\|\hat{\mathbf{F}}\hat{\mathbf{F}}^\top M_{\mathbf{F}}\|_{\rm op}
			\leq \frac{\sqrt{2}}{\sqrt{NT}}\|\hat{\mathbf{F}}^\top M_{\mathbf{F}}\|_{\rm op} \\
			& = \frac{\sqrt{2}}{\sqrt{NT}}\|(\hat{\mathbf{F}}^\top - H\mathbf{F}^\top)M_{\mathbf{F}}\|_{\rm op} \leq \frac{\sqrt{2}}{\sqrt{NT}}\|\hat{\mathbf{F}} - \mathbf{F}H^\top\|_{\rm op} \\
			& = O_P\left(\frac{R}{\sqrt{p_x}} + {\frac{1}{\sqrt{NT}}}\right),
		\end{aligned}
	\end{equation*}
	where the last line follows by Lemma~\ref{lem:factor_error_2}.

	\medskip
	\noindent\emph{Proof of (ii).} We now prove part (ii). Conditionally on $\mathbf{X}$, the matrix $P_{\hat{\mathbf{F}}}$ is a deterministic projection matrix.
	Under Assumption~\ref{as:fapd_factor} (iv), by the Hanson-Wright inequality, see \cite{Vershynin_2026}, Theorem 6.2.2, there exists $c>0$ such that for every $u>0$
	\begin{equation*}
		\Pr\left(\left|\mathbf E^\top P_{\hat{\mathbf{F}}}\mathbf E - \mathbb E[\mathbf E^\top P_{\hat{\mathbf{F}}}\mathbf E\mid \mathbf{X}]\right|\geq u\mid \mathbf{X} \right) \leq 2\exp\left(-c\min\left\{\frac{u^2}{\sigma^4\|P_{\hat{\mathbf{F}}}\|_F^2},\frac{u}{\sigma^2\|P_{\hat{\mathbf{F}}}\|_{\rm op}}\right\}\right),
	\end{equation*}
	where $\|P_{\hat{\mathbf{F}}}\|_F^2=R$ and $\|P_{\hat{\mathbf{F}}}\|_{\rm op}=1$ by the properties of projection matrices. Moreover,
	\begin{equation*}
		\mathbb E\left[\mathbf E^\top P_{\hat{\mathbf{F}}}\mathbf E\mid \mathbf{X}\right] = \operatorname{tr}\left(P_{\hat{\mathbf{F}}}\operatorname{Var}(\mathbf E\mid \mathbf{X})\right) \leq	C\operatorname{tr}(P_{\hat{\mathbf{F}}}) = CR,
	\end{equation*}
	where the last line follows under Assumption~\ref{as:fapd_factor} (v). This shows that
	\begin{equation*}
		\|P_{\hat{\mathbf{F}}}\mathbf E\|_{NT}^2 = \frac{1}{NT}\mathbf E^\top P_{\hat{\mathbf{F}}}\mathbf E = O_P\left(\frac{R}{NT}\right).
	\end{equation*}
\end{proof}

\begin{proof}[Proof of Corollary~\ref{cor:fapd_rate}]
	By Theorem~\ref{thm:fapd_oracle} and Assumptions~\ref{as:tuning}, we have
	\begin{equation*}
		\|\mathbf{Q}\hat\delta+\hat{\mathbf{F}}\hat\gamma-\mathbf{Q}\delta-\mathbf{F}\gamma\|_{NT}^2 = O_P\left(\frac{m\log(eK/m)+s\log(ed_{(m)}/s)}{NT} + \|{M}_{\hat{\mathbf{F}}}\mathbf{F}\gamma\|_{NT}^2 + \|P_{\hat{\mathbf{F}}}\mathbf E\|_{NT}^2 \right).
	\end{equation*}
	By Lemma~\ref{lem:factor_error}, $\|{M}_{\hat{\mathbf{F}}}\mathbf{F}\gamma\|_{NT}^2+\|P_{\hat{\mathbf{F}}}\mathbf E\|_{NT}^2 = O_P(p_x^{-1})$. The result follows since $p_x^{-1} = O(r_{s,m}^2)$.
\end{proof}

\bibliographystyle{plainnat}
\bibliography{references}

\end{document}